\newif\ifstoc
\newcommand{\stocoption}[2]
{\ifstoc%
#1%
\else%
#2%
\fi}
\newtheorem{cl}[theorem]{Claim}
\newtheorem{fact}{Fact}
\newtheorem{theorem}{Theorem}[section]
\newtheorem{definition}[theorem]{Definition}
\newtheorem{lemma}[theorem]{Lemma}
\newtheorem{cl}[theorem]{Claim}
\newenvironment{proof}{{\bf Proof:  }}{\hfill\rule{2mm}{2mm}}
\numberwithin{algorithm}{section}
\newcommand{\ignore}[1]{}
\newcommand{\initOneLiners}{%
    \setlength{\itemsep}{0pt}
    \setlength{\parsep }{0pt}
    \setlength{\topsep }{0pt}
%      \usecounter{myLISTctr}
}
\newenvironment{OneLiners}[1][\ensuremath{\bullet}]
    {\begin{list}
        {#1}
        {\initOneLiners}}
    {\end{list}}
\newcommand{\sse}{\subseteq}
\def\opt{{\sf opt}\xspace}
\def\alg{{\sf alg}\xspace}
\def\pin{\ensuremath{{\cal I}_{in}}\xspace}
\def\kin{\ensuremath{{k}_{in}}\xspace}
\def\pot{\ensuremath{{\cal I}_{out}}\xspace}
\def\kot{\ensuremath{{k}_{out}}\xspace}
\def\lp{\ensuremath{{\cal LP}}\xspace}
\def\p{\ensuremath{{\cal P}}\xspace}
\def\cL{\ensuremath{{\cal L}}\xspace}
\def\E{\ensuremath{{\mathbb E}}}
\newcommand{\I}{\ensuremath{\mathcal{I}}\xspace}
\newcommand{\J}{\ensuremath{\mathcal{J}}\xspace}
\newcommand{\K}{\ensuremath{\mathcal{K}}\xspace}
\newcommand{\D}{\ensuremath{\mathcal{D}}\xspace}
\renewcommand{\P}{\ensuremath{\mathcal{P}}\xspace}
\DeclareMathOperator{\spn}{span}
\newcounter{mynote}[section]
\renewcommand{\themynote}{\thesection.\arabic{mynote}}
\newcommand{\agnote}[1]{}
\newcommand{\vnote}[1]{}
\newcommand{\agnote}[1]{\refstepcounter{mynote}$\ll${\bf Anupam~\themynote:} {\sf #1}$\gg$\marginpar{\tiny\bf AG~\themynote}}
\newcommand{\vnote}[1]{\refstepcounter{mynote}$\ll${\bf Viswa~\themynote:} {\sf #1}$\gg$\marginpar{\tiny\bf VN~\themynote}}
\title{A Stochastic Probing Problem with Applications} 
\date{}
\author{ Anupam Gupta\inst{1}
 \and Viswanath Nagarajan\inst{2}
}
\institute{Computer Science Department, Carnegie Mellon University. \and IBM T.J. Watson Research Center. }
\author{
Anupam Gupta\thanks{Computer Science Department, Carnegie Mellon University, Pittsburgh, PA 15213, USA. Supported in part by NSF award CCF-1016799  and an Alfred P.~Sloan Fellowship.}
 \and Viswanath Nagarajan\thanks{IBM T.J. Watson Research Center. }
}
\begin{document}
\maketitle

\begin{abstract}
  We study a general {\em stochastic probing} problem defined on a
  universe $V$, where each element $e\in V$ is ``active'' independently
  with probability $p_e$. Elements have weights $\{w_e:e\in V\}$ and the goal is
  to maximize the weight of a chosen subset $S$ of active elements.
  However, we are given only the $p_e$ values---to determine whether or
  not an element $e$ is active, our algorithm must {\em probe}~$e$. If
  element $e$ is probed and happens to be active, then $e$ must
  irrevocably be added to the chosen set $S$; if $e$ is not active then
  it is not included in $S$. Moreover, the following conditions must
  hold in every random instantiation:
  \begin{itemize}
  \item the set $Q$ of probed elements satisfy an ``outer'' packing
    constraint, 
  \item the set $S$ of chosen elements satisfy an ``inner'' packing
    constraint.
  \end{itemize}
  The kinds of packing constraints we consider are intersections of
  matroids and knapsacks. 
  Our results provide a simple and unified view of results in stochastic  matching~\cite{CIKMR09,BGLMNR12} and Bayesian mechanism design~\cite{CHMS10}, and can also handle more general constraints. As an application, we obtain the first polynomial-time $\Omega(1/k)$-approximate ``Sequential Posted  Price Mechanism'' 
under $k$-matroid intersection feasibility constraints, improving on prior work~\cite{CHMS10,Y11,KW12}.    
\end{abstract}

\section{Introduction}

We study an adaptive stochastic optimization problem along the lines
of~\cite{MSU99,DGV04,DGV05,GM07}. The {\em stochastic probing} problem is
defined on a universe $V$ of elements with weights $\{w_e:e\in V\}$.
We are also given two downwards-closed set systems $(V, \pin)$ and $(V,
\pot)$, which we call the \emph{inner} and \emph{outer} packing
constraints, whose meanings we shall give shortly. For each element $e
\in V$, there is a probability $p_e$, where element $e$ is
\emph{active}/\emph{present} with this probability, independently of all
other elements. We want to choose a set $S \subseteq V$ of active elements belonging to
$\pin$, i.e., all elements in the chosen set $S$ must be
active and also independent according to the inner packing constraint
($S \in \pin$). The goal is to maximize the expected weight of the
chosen set.

However, the information about which elements are active and which are
inactive is not given up-front. All we know are the probabilities $p_e$,
and that the active set is a draw from the product distribution given by
$\{p_e\}_{e \in V}$---to determine if an element $e$ is active or not,
we must {\em probe} $e$. Moreover, if we probe $e$, and $e$ happens to
be active, then we {\em must} irrevocably add $e$ to our chosen set
$S$---we do not have a right to discard any probed element that turns
out to be active. This ``query and commit'' model is quite natural in a
number of applications such as kidney exchange, online dating and
auction design (see below for details).

Finally, there is a constraint on which elements we can probe: the set
$Q$ of elements probed in any run of the algorithm must be independent
according to the outer packing constraint $\pot$---i.e., $Q \in \pot$.
This is the constraint that gives the probing problem its richness.
Since every probed element that is active must be included in the
solution which needs to maintain independence in $\pin$, at any point
$t$ (with current solution $S_t$ and currently probed set $Q_t$) we can
only probe those elements $e$ with $Q_t
\cup \{e\} \in \pot$ and $S_t \cup \{e\} \in \pin$.\stocoption{\footnote{Indeed, if $p_e = 0$ there is no point
  probing $e$; and if $p_e>0$ there is a danger that $e$ is active and
  we will be forced to add it to $S_t$, which we cannot if $S_t \cup
  \{e\} \not\in \pin$.}}{(Indeed, if $p_e = 0$ there is no point
  probing $e$; and if $p_e>0$ there is a danger that $e$ is active and
  we will be forced to add it to $S_t$, which we cannot if $S_t \cup
  \{e\} \not\in \pin$.)}

While the stochastic probing problem seems fairly abstract, it has
interesting applications: we give two applications of this problem, to
designing posted-price Bayesian auctions, and to modeling problems in
online dating/kidney exchange. We first state our results 
%in Section~\ref{sec:our-result} 
and then describe these applications. 
%in Section~\ref{sec:applications}.

\stocoption{
\smallskip
{\bf Our Results.}
}{
\subsection{Our Results}
\label{sec:our-result}
}
For the {\em unweighted} stochastic probing problem (i.e., $w_e = 1$ for
all $e \in V$), if both inner and outer packing constraints are given by
$k$-systems\footnote{For any integer $k$, a \emph{$k$-system} is a
  downwards-closed collection of sets $\I \sse 2^V$ such that for any $S
  \sse V$, the maximal subsets of $S$ that belong to $\I$ can differ in
  size by at most a factor of $k$. Examples 
  are intersections of $k$ matroids, and $k$-set packing.}, we consider
the greedy algorithm which considers elements in decreasing order of
their probability $p_e$, probing them whenever feasible.
\begin{theorem}[Unweighted Probing]
  \label{thm:unit-wt}
  The greedy algorithm for unweighted stochastic probing achieves a
  tight $\frac1{\kin + \kot}$-approximation ratio, when $\pin$ is a
  $\kin$-system and $\pot$ is a $\kot$-system.
\end{theorem}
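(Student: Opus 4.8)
The plan is to compare the greedy solution against that of an optimal adaptive strategy $\opt$ realization-by-realization, and then to aggregate the comparison using the fact that greedy processes elements in nonincreasing order of $p_e$. The first step is to reduce both quantities to weighted sums of ``probing probabilities.'' Relabel the elements $\pi_1,\dots,\pi_n$ so that $p_{\pi_1}\ge\cdots\ge p_{\pi_n}$, and let $E_i$ (resp.\ $F_i$) be the event that greedy (resp.\ $\opt$) probes $\pi_i$. The key point is that whether a given element $\pi_i$ is ever probed is decided purely from the active/inactive statuses of the elements probed strictly before it, hence is independent of $\pi_i$'s own status; so by linearity of expectation $\E[\,\alg\,]=\sum_i p_{\pi_i}\Pr[E_i]$ and $\opt=\sum_i p_{\pi_i}\Pr[F_i]$, and it suffices to produce a suitable comparison between the $\Pr[E_i]$'s and the $\Pr[F_i]$'s.

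Second, I would prove a per-realization structural statement. Fix the random outcome. After greedy has examined $\pi_1,\dots,\pi_j$, let $Q_j$ be its set of probed elements and $S_j\sse Q_j$ its set of accepted (active, probed) elements. Because $Q_j$ and $S_j$ only grow during the run and both $\pot$ and $\pin$ are downward closed, $Q_j$ is ``locally maximal'' among the first $j$ elements: no further $\pi\in\{\pi_1,\dots,\pi_j\}$ can be appended while keeping the probed set in $\pot$ and the resulting accepted set in $\pin$. On the other hand the restriction of $\opt$'s probed set to $\{\pi_1,\dots,\pi_j\}$ is itself a feasible set in this joint sense. I would then invoke two facts about $k$-systems: (i) if $\pin$ is a $\kin$-system then the projection $\{T:\ T\cap(\text{active set})\in\pin\}$ is again a $\kin$-system; and (ii) the intersection of a $\kin$-system with a $\kot$-system is a $(\kin+\kot)$-system, which I would establish by the standard ``maximal set versus independent set'' counting argument. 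Combining these, a locally maximal probed set and any feasible probed set within $\{\pi_1,\dots,\pi_j\}$ differ in size by at most a factor $\kin+\kot$, so $|Q_j|\ge\frac{1}{\kin+\kot}\,|P\cap\{\pi_1,\dots,\pi_j\}|$ where $P$ is $\opt$'s probed set; taking expectations gives $\sum_{i\le j}\Pr[E_i]\ \ge\ \frac{1}{\kin+\kot}\sum_{i\le j}\Pr[F_i]$ for every $j$.

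Finally I would aggregate and address tightness. Since the prefix sums of the $\Pr[E_i]$ dominate $\tfrac{1}{\kin+\kot}$ times those of the $\Pr[F_i]$, and the multipliers $p_{\pi_1}\ge\cdots\ge p_{\pi_n}\ge0$ are nonincreasing, a summation-by-parts (Abel) argument yields $\sum_i p_{\pi_i}\Pr[E_i]\ge\frac{1}{\kin+\kot}\sum_i p_{\pi_i}\Pr[F_i]$, i.e.\ $\E[\,\alg\,]\ge\frac{1}{\kin+\kot}\,\opt$. For the matching lower bound it suffices to take deterministic instances ($p_e=1$): choosing $\pin$ and $\pot$ to be intersections of $\kin$ and $\kot$ partition matroids arranged as in the classical tight example for greedy on $k$-matroid intersection makes $\pin\cap\pot$ a $(\kin+\kot)$-system on which greedy (with adversarial tie-breaking, which is consistent with equal $p_e$) outputs a maximal solution a factor $\kin+\kot$ below optimal.

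I expect the main obstacle to be the per-realization step, and specifically the interaction between \emph{committed} probes and the outer constraint. Greedy may be forced to probe elements that turn out inactive, consuming $\pot$-budget, and it may also skip elements that it could in hindsight have accepted; consequently $Q_j$ is maximal not against the naive family $\{T:\ T\in\pot,\ T\cap(\text{active set})\in\pin\}$ but against a somewhat subtler ``greedy-feasible'' family determined by the processing order, and one must verify both that this family is still a $(\kin+\kot)$-system and that the relevant restriction of $\opt$'s probing lies inside it so that the $k$-system comparison applies. Getting these quantifiers right — and making sure the Abel step is applied to prefix sums of the \emph{right} quantities (probed sets, not accepted sets, so that the $p_{\pi_i}$-weighting produces exactly $\opt$) — is where the real content of the proof lies.
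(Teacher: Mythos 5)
Your reduction of both $\E[\alg]$ and $\opt$ to weighted sums $\sum_i p_{\pi_i}\Pr[\pi_i\text{ probed}]$ is correct and matches the paper (Lemma~\ref{lem:alg-value} and the argument of Claim~\ref{cl:lp-opt}), and your tightness example is the one the paper uses. But the central step --- the per-realization bound $|Q_j|\ge\frac{1}{\kin+\kot}\,|P\cap\{\pi_1,\dots,\pi_j\}|$ and the resulting prefix-sum inequality $\sum_{i\le j}\Pr[E_i]\ge\frac{1}{\kin+\kot}\sum_{i\le j}\Pr[F_i]$ --- is false, and the difficulty you flag in your last paragraph is not repairable along these lines. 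Take $\pin=\{\text{sets of size }\le 1\}$, $\pot$ trivial, $p_{\pi_1}=1$ and $p_{\pi_2}=\cdots=p_{\pi_n}=\epsilon$. Greedy probes $\pi_1$, accepts it, and stops, so $\E[|Q_n|]=1$; an optimal policy may probe $\pi_n,\dots,\pi_2$ first (this costs it nothing), giving $\E[|P_n|]\approx n$. The root cause is an asymmetry between the two constraints: an element skipped by greedy because of the inner constraint lies in $\spn(S_j)$ where $S_j$ is the \emph{accepted} set, and the number of OPT's \emph{probed} elements inside $\spn(S_j)$ cannot be bounded by $\kin|S_j|$, because OPT's probed set need not be independent in $\pin$ --- only its accepted subset is. So no cardinality comparison of probed sets, whether per realization or in expectation, can close the argument; your facts (i) and (ii) about $k$-systems are true but greedy's probed set is simply not maximal in the family $\{T:T\in\pot,\ T\cap J\in\pin\}$, since greedy must commit to a probe before learning whether the element is active.

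The paper's dual-fitting proof is built precisely around this asymmetry. Its LP has two families of constraints, $\sum_{e\in S}p_ey_e\le r(S)$ for the inner system (note the $p_e$ weighting: it is the expected \emph{accepted} mass, not the probed mass, that must respect $\pin$) and $\sum_{e\in S}y_e\le r'(S)$ for the outer one. For each realization $\pi$ it exhibits a feasible dual of value $\kin|S_\pi|+\kot\sum_{e\in Q_\pi}p_e$: elements blocked by the inner constraint are covered by setting $\alpha(\spn(S_\pi))=1$ at cost $r(\spn(S_\pi))\le\kin|S_\pi|$, while elements blocked by the outer constraint are covered by a telescoping assignment of $\beta$'s along the chain $\spn'(\{a_1,\dots,a_h\})$ --- this is where the greedy ordering by decreasing $p_e$ enters, playing the role of your Abel step. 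The two charges $|S_\pi|$ and $\sum_{e\in Q_\pi}p_e$ differ wildly per realization but have equal expectations, namely $\E[\alg]$; averaging the duals and invoking weak duality gives $\opt\le(\kin+\kot)\E[\alg]$. To salvage your approach you would need to split your single prefix-sum comparison into two separate charging arguments, one against $|S_\pi|$ and one against $\sum_{e\in Q_\pi}p_e$, at which point you will essentially have rederived the paper's dual solution.
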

This result generalizes the greedy 4-approximation algorithm for
unweighted stochastic matching, by Chen et al.~\cite{CIKMR09}, where
both inner and outer constraints are $b$-matchings (and hence
$2$-systems). For the special case of stochastic matching,
Adamczyk~\cite{A11} gave an improved factor-$2$ bound. However,
Theorem~\ref{thm:unit-wt} is tight in our setting of general
$k$-systems; its proof is LP-based, and we feel it is much simpler than
previous proofs for the special cases. The main idea of our proof
is a dual-fitting argument that extends the Fisher et al.~\cite{FNW78}
analysis of the greedy algorithm for $k$-matroid intersection. 
\stocoption{}{
In Section~\ref{sec:unwt-deadline} we generalize our unweighted probing result to also handle ``global time'' constraints, at the loss of a small constant factor in the approximation ratio (see Theorem~\ref{thm:unwt-deadline}). }

There is no known greedy algorithm for stochastic probing in the
\emph{weighted} case (as opposed to the deterministic setting of finding
the maximum weight set subject to a $k$-system, where greedy gives a
$1/k$-approximation~\cite{J76,FNW78}); indeed, natural greedy approaches
can be arbitrarily bad even for weighted stochastic
matching~\cite{CIKMR09}. Hence, we use an LP relaxation for the weighted
probing problem, where variables correspond to marginal probabilities of
probing/choosing elements in the optimal policy. This is similar to
previous works on such adaptive stochastic problems~\cite{DGV04,DGV05,BGLMNR12}. Our rounding algorithm is based on the recently introduced
notion of {\em contention resolution (CR) schemes} for packing
constraints, due to Chekuri et al.~\cite{CVZ11}. \stocoption{}{Loosely speaking, given
a packing constraint on a universe $V$ and a fractional solution
$\{x_e\}_{e\in V}$, a CR-scheme is a two-step rounding procedure where
\begin{OneLiners}
\item[a.] Each element $e$ is chosen independently into $I_1\sse V$ with
  probability proportional to $x_e$.% ($b<1$ is some parameter). 
\item[b.] A feasible subset $I_2\sse I_1$ (suitably computed) is output
  as the solution.  
\end{OneLiners} } We show that the existence of suitable CR-schemes for both $\pin$ and
$\pot$ imply an approximation algorithm for weighted stochastic probing,
where the approximation ratio depends on the quality of the two
CR-schemes. Our main result for weighted stochastic probing is
Theorem~\ref{thm:stoc-probe} (which requires some notation to state
precisely), \stocoption{ but a representative corollary is an $\Omega\left(\frac{1}{\kin+\kot}\right)$-approximation
  algorithm 
  %for weighted stochastic probing 
when the inner and outer constraints  are intersections of $\kin$ and $\kot$ matroids, respectively.}{but here is a representative corollary:
\begin{theorem}[Weighted Probing: Special Case]
  \label{thm:main2-specialcase}
  There is an $\Omega\left(\frac{1}{\kin+\kot}\right)$-approximation
  algorithm for weighted stochastic probing when the inner and outer constraints
  are intersections of $\kin$ and $\kot$ matroids, respectively. Moreover, there is an $\Omega\left(\frac{1}{(\kin+\kot)^2}\right)$-approximation algorithm under arbitrary $\kin$ and $\kot$ system constraints.
\end{theorem}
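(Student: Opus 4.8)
The plan is to follow the two-phase recipe indicated above. In the first phase we write a linear program whose variables $y_e$ are meant to be $\Pr[\text{the optimal policy probes }e]$, with objective $\max\sum_{e\in V} w_e p_e y_e$ subject to $(y_e)_{e\in V}\in\mathcal{Q}_{out}$ and $(p_e y_e)_{e\in V}\in\mathcal{Q}_{in}$, where $\mathcal{Q}_{out},\mathcal{Q}_{in}$ are the natural polytope relaxations of $\pot$ and $\pin$ (the matroid-intersection polytopes in the first part of the statement, and the rank relaxations of the $k$-systems in the second). To see this is a valid relaxation, fix any adaptive policy and let $y^\star$ be its vector of probing marginals: in every instantiation the probed set lies in $\pot$ and the chosen set lies in $\pin$, so averaging over instantiations gives $y^\star\in\mathcal{Q}_{out}$; and element $e$ is \emph{chosen} precisely when it is probed and active, where --- and this is the one delicate point --- the event that the policy probes $e$ is independent of whether $e$ is active, because the policy's decision to probe $e$ is a function only of the (mutually independent) states of elements other than $e$. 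Hence $\Pr[e\text{ chosen}]=p_e y^\star_e$, so $(p_e y^\star_e)_e\in\mathcal{Q}_{in}$ and the objective equals the expected value of the policy; thus $\mathrm{LP}\ge\opt$.

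In the second phase we round. Solve the LP in polynomial time (via a separation oracle for each of the two polytopes), obtain an optimal $y$, fix a scaling parameter $\theta\in(0,1)$, and process the elements in a uniformly random order; when $e$ is reached, flip an independent coin that comes up heads with probability $\theta y_e$, and on heads probe $e$ provided that this keeps the set probed so far in $\pot$ \emph{and} keeps the chosen set so far in $\pin$; if $e$ is then found active it is irrevocably added to the chosen set, as the problem requires. This is exactly the composition of an outer contention-resolution (CR) scheme for $\pot$ on the vector $\theta y$ with an inner CR scheme for $\pin$ on the vector $\theta(p_e y_e)_e$ restricted to the active elements, and the output is feasible by construction.

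For the matroid-intersection case we plug in the monotone CR schemes of Chekuri et al.~\cite{CVZ11}: a single matroid admits one whose per-element acceptance probability at scale $b$ is at least $(1-e^{-b})/b\ge 1-b/2$, so union-bounding over the $\kot$ constituent matroids, the outer scheme fails to keep $e$ with probability $O(\kot\theta)$, and likewise the inner scheme fails with probability $O(\kin\theta)$; since the two failure probabilities add (we are in the small-$\theta$ linear regime, so the two $(1-O(k\theta))$ survival factors multiply to $1-O((\kin+\kot)\theta)$), we get $\Pr[e\text{ chosen}]\ge\theta p_e y_e\,(1-O((\kin+\kot)\theta))$, hence an expected weight of $\theta(1-O((\kin+\kot)\theta))\cdot\mathrm{LP}$, which is $\Omega(1/(\kin+\kot))\cdot\opt$ after setting $\theta=\Theta(1/(\kin+\kot))$. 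For arbitrary $k$-systems no comparably strong CR scheme is available, so instead we bound each blocking probability by a direct Markov-type estimate on the number of previously probed elements that block $e$; this costs an extra factor of $k$ (the rank relaxation describes the convex hull of a $k$-system only up to a factor $k$), yielding blocking probabilities $O(\kot^2\theta)$ and $O(\kin^2\theta)$ and, after optimizing $\theta$, the weaker $\Omega(1/(\kin+\kot)^2)$ bound.

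The main obstacle is the analysis of the rounding, for two related reasons. First, one must carry out the composition of the two schemes carefully: the inner scheme sees only the active subset of what the outer scheme probed, the query-commit rule forbids discarding an active probed element, and conditioning on ``the coin of $e$ is heads'' must be shown not to bias the states of the earlier elements; all of this relies on the product structure of the activations but must be made precise, which is exactly where the analogous arguments for stochastic matching become technical. Second, obtaining the \emph{additive} $\kin+\kot$ rather than a multiplicative $\kin\kot$ hinges on keeping $\theta$ small enough that both schemes operate in their near-lossless regime, and on checking that the union bound over the constituent matroids is legitimate for the monotone CR schemes of~\cite{CVZ11} --- pinning down these constants is the crux. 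By comparison the relaxation step is routine once the independence lemma is in hand, and the $k$-system bound is then a mechanical substitution of the weaker CR guarantee.
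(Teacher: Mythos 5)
Your overall architecture --- the LP on probing marginals $y_e$ with $(y_e)\in\P(\pot)$ and $(p_ey_e)\in\P(\pin)$, validity via the independence of ``$e$ is probed'' and ``$e$ is active'', and rounding by composing contention-resolution schemes at a small scale $\theta=\Theta(1/(\kin+\kot))$ --- is exactly the paper's. But there are two genuine gaps. First, the rounding you describe is a \emph{single} online pass in which the outer feasibility check at $e$ depends on the set probed so far, which in turn depends on the inner checks at earlier elements and hence on their random activations. This is not ``exactly the composition'' of an outer CR scheme on $I$ with an inner CR scheme on the active elements: in the paper's algorithm the outer map $\pi_{out}$ is applied \emph{offline} to the sampled set $I$ before any probing, so that $P=\pi_{out}(I)$ is a function of $I$ alone, and only the inner greedy pass (which must therefore be an \emph{ordered} CR scheme) interacts with the activations $J$. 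The step you yourself flag as the crux --- that the inner scheme receives the correlated set $P\cap J$ rather than an independent sample at rate $\theta p_ey_e$ --- is resolved in the paper by the \emph{monotonicity} requirement on the inner scheme (Definition~\ref{defn:cr}(ii)): since $P\cap J\sse I\cap J$, monotonicity gives $\Pr[e\notin\pi_{in}(P\cap J)]\le\Pr[e\notin\pi_{in}(I\cap J)]$ pointwise, and the additive bound $b\,(c_{out}+c_{in}-1)\,x_e$ of Lemma~\ref{lem:main} follows. You name this obstacle but do not supply the argument, and your entangled single-pass variant would in any case need a separate span-based analysis that you also do not give.

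Second, the $k$-system case. You claim the LP can be solved exactly ``via a separation oracle for each of the two polytopes,'' but for general $k$-systems the rank constraints cannot be separated in polynomial time (evaluating the rank is NP-complete for $k\ge3$, as the paper notes); one only obtains a $\frac{1}{\max\{\kin,\kot\}}$-\emph{approximate} LP solution via approximate separation and optimization, and \emph{this} is the source of the extra $\kin+\kot$ factor in the second half of the statement. Your alternative accounting --- blocking probabilities $O(\kot^2\theta)$ and $O(\kin^2\theta)$ from a ``Markov-type estimate'' --- is both unjustified and unnecessary: the ordered CR scheme for $k$-systems is $(b,1-kb)$, the same linear form as for matroid intersections, so the rounding itself loses only $\Omega(1/(\kin+\kot))$ in both cases.
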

} Some of the other allowed constraints are unsplittable flow on trees
(under the ``no-bottleneck'' assumption) and packing integer programs.
Details on the weighted case appear in Section~\ref{sec:weighted}.

\stocoption{
\smallskip
{\bf Applications.}
}{
\subsection{Applications} \label{sec:applications}
}
We now give two applications: the first shows how our algorithm for the
weighted probing problem immediately gives us posted price auctions for single
parameter settings where the feasibility set is given by intersections
of matroids, the second is an application for dating/kidney exchange.
Both of these extend and generalize previous results in these areas.

%\smallskip
{\bf Bayesian Auction Design.} Consider a mechanism design setting
for a single seller facing $n$ single-parameter buyers.  The seller has
a feasibility constraint given by a downward-closed set system $\I\sse
2^{[n]}$ and is allowed to serve any set of buyers from $\I$. Buyers are
{\em single-parameter}; i.e., buyer $i$'s private data is a single real
number $v_i$ which denotes his valuation of being served (if $i$ is not
served then he receives zero value). In the {\em Bayesian} setting, the
valuation $v_i$ is drawn from some set $\{0, 1, \ldots, B\}$ according
to probability distribution $\D_i$; here we assume that the valuations
of buyers are discrete and independently drawn.  The valuation $v_i$ is private to
the buyer, but the distribution $\D_i$ is public knowledge. The goal in
these problems is 
a revenue-maximizing truthful mechanism that accepts bids from
buyers and outputs a feasible allocation (i.e., a set $S \in \I$ of
buyers that receive service), along with a price that each buyer has to
pay for service.  A very special type of mechanism is a {\em Sequential
  Posted Pricing Mechanism} (SPM) that chooses a price for each buyer
and makes ``take-it-or-leave-it'' offers to the buyers in some
order~\cite{SG06,BH08,CHMS10}. Such mechanisms are simple to run and
obviously truthful (see~\cite{CHMS10} for a discussion of other
advantages), hence it is of interest to design SPMs which achieve
revenue comparable to the revenue-optimal mechanism.

Designing the best SPM can be cast as a stochastic probing problem on a
universe $V = \{1,2,\ldots, n\} \times \{0,1,\ldots, B\}$, where element
$(i,c)$ corresponds to offering a price $c$ to buyer $i$. Element
$(i,c)$ has weight $w_{ic} = c$, which is the revenue obtained if the
offer ``price $c$ for buyer $i$'' is accepted, and has probability
$p_{ic} = \Pr_{v_i \sim \D_i}\left[ v_{i} \geq c \right]$, which is the
probability that $i$ will indeed accept service at price $c$. The inner
constraint $\pin$ is now the natural lifting of the actual constraints
$\I$ to the universe $V$, where $\{(i,c)\}_{c\ge 0}$ are copies of $i$.
The outer constraint $\pot$ requires that at most one of the elements
$\{(i,c) \mid c \geq 0\}$ can be probed for each $i$: i.e., each buyer
$i$ can be offered at most one price. This serves two purposes: firstly,
it gives us a posted-price mechanism. Secondly, we required in our model
that each element $(i,c)$ is active with probability $p_{ic}$, {\em
  independently} of the other elements $(i,c')$; however, the underlying
semantics imply that if $i$ accepts price $c$, then she would also
accept any $c'\le c$, which would give us correlations. Constraining
ourselves to probe at most one element corresponding to each buyer $i$
means we never probe two correlated elements, and hence the issue of
correlations never arises.

Our results for stochastic probing give near-optimal SPMs for many
feasibility constraints. Moreover, we show that our LP relaxation not
only captures the best possible SPMs, but also captures the optimal
truthful mechanism \emph{of any form} under the Bayes-Nash equilibrium
(and hence Myerson's optimal mechanism~\cite{M81}).
In the case of $k$ matroid intersection feasibility constraints, our
results give the first polynomial-time sequential posted price
mechanisms whose revenue is $\Omega(1/k)$ times the optimum. Previous
papers~\cite{CHMS10,Y11,KW12} proved the existence of such SPMs, but
they were polynomial-time only for $k\le 2$. For larger $k$, previous
works only showed \emph{existence} of $\Omega(1/k)$-approximate SPMs, and polynomial-time
implementations of these SPMs only obtained an $\Omega(1/k^2)$ fraction
of the optimal revenue. The
previous results compare the performance of their SPMs directly to the
revenue of the optimal mechanism~\cite{M81}, whereas we compare our SPMs
to an LP relaxation of this mechanism, which is potentially larger. Moreover, our general framework gives us more power:
\begin{itemize}
\item We can handle broader classes of feasibility constraints~$\I$, not
  just matroid intersections: e.g., we can model auctions involving
  unsplittable flow on trees, which can be used to capture allocations
  of point-to-point bandwidths in a tree-shaped network. This is because
  the feasibility constraints $\I$ for the auction directly translate
  into inner constraints for the probing problem.

\item We can also handle additional side-constraints to the
  auction via a richer class of outer constraints $\pot$. For example,
  the seller may incur costs in the form of time/money to make offers.
  Such budget limits can be modeled in the stochastic probing problem as
  an extra outer knapsack constraint, and our algorithm finds
  approximately optimal SPMs even in this case. More generally, our
  algorithm can easily handle a rich class of other resource constraints
  (matroid intersections, packing IPs etc) on the auction. However,
  in the presence of these side-constraints, our algorithm's
  revenue is an approximation only to the best SPM satisfying these
  constraints, and no longer comparable to the unconstrained optimal
  mechanism.
\end{itemize}

{\bf Online dating and Kidney Exchange~\cite{CIKMR09}} Consider a
dating agency with several users. Based on the profiles of users, the
agency can compute the probability that any pair of users will be
compatible. Whether or not a pair is successfully matched is only known
after their date; moreover, in the case of a match, both users
immediately leave the site (happily). Furthermore, each user has a
patience/timeout level, which is the maximum number of failed dates after which
he/she drops out of the site (unhappily). The objective of the dating
site is to schedule dates so as to maximize the expected number of
matched pairs. (Similar constraints arise in kidney exchange systems.)
This can be modeled as stochastic probing with the universe $V$ being
edges of the complete graph whose nodes correspond to users. The inner
constraints specify that the chosen edges be a matching in $G$.  The
outer constraints specify that for each node $j$, at most $t_j$ edges
incident to $j$ can be probed, where $t_j$ denotes the patience level of
user $j$. Both these are $b$-matching constraints; in fact when the
graph is bipartite, they are intersections of two partition matroids.

Our results will give an alternate way to obtain constant factor
approximation algorithms for this stochastic matching problem.  Such
algorithms were previously given by~\cite{CIKMR09,BGLMNR12}, but they
relied heavily on the underlying graph structure. Additionally, our
techniques allow for more general sets of constraints. E.g., not all
potential dates may be equally convenient to a user, and (s)he might
prefer dates with other nearby users. This can be modeled as a sequence
of patience bounds for the user, specifying the maximum number of dates
that the user is willing to go outside her neighborhood/city/state etc.
In particular, if $u_1,u_2,\ldots, u_n$ denote the users in decreasing
distance from user $j$ then
%the outer constraint has 
there is a non-decreasing sequence $\langle t_j^1,\ldots,t_j^n\rangle$
of numbers where user $j$ wishes to date at most $t^r_j$ users among the
$r$ farthest other users $\{u_1,\ldots,u_r\}$.  This corresponds to the
stochastic probing problem, where the inner constraint remains matching
but the outer constraint becomes a $2$-system. Our algorithm achieves a
constant approximation even here. 

%%%%%%%%%%%%%%%%%%%%%%%%%%%%%%%%%%%%%%%%%%%%%%%%%%%%%%%%%%%%%%%%%%%%%%%%%%

\stocoption{
\smallskip
{\bf Other Related Work.}
}{
\subsection{Other Related Work} \label{sec:previous-work}
}
 Dean et al.~\cite{DGV04,DGV05} were the first to consider
approximation algorithms for {\em stochastic packing} problems in the
adaptive optimization model. For the stochastic knapsack problem, where
items have random sizes (that instantiate immediately after selection),
\cite{DGV04} gave a $(3+\epsilon)$-approximation algorithm; this was
improved to $2+\epsilon$ in~\cite{BGK11,B11}. \cite{DGV05} considered
stochastic packing integer programs (PIPs) and gave approximation
guarantees matching the best known deterministic bounds. Our stochastic
probing problem can be viewed as a two-level generalization of
stochastic packing, with two different packing constraints: one for probed
elements, and one for chosen elements. However, all random variables in our
setting are $\{0,1\}$-valued (each element is either active or not),
whereas~\cite{DGV04,DGV05} allow arbitrary non-negative random
variables.

Chen et al.~\cite{CIKMR09} first studied a {\em stochastic probing}
problem: they introduced the unweighted stochastic matching problem and
showed that greedy is a $4$-approximation algorithm. Adamczyk~\cite{A11}
improved the analysis to show a bound of $2$. Both these proofs involve
intricate arguments on the optimal decision tree. In contrast, our
analysis of greedy is much simpler and LP-based, and extends to the more
general setting of $k$-systems. (For the stochastic matching, our result
implies a $4$-approximation.) Bansal et al.~\cite{BGLMNR12} gave a
different LP proof that greedy is a $5$-approximation for stochastic
matching, but their proof relied heavily on the graph structure, making
the extension to general $k$-systems unclear.  \cite{BGLMNR12} also gave
the first $O(1)$-approximation for {\em weighted} stochastic matching,
which was LP-based. (\cite{CIKMR09} showed that natural greedy approaches
for weighted stochastic matching are arbitrarily bad.) Our algorithm for
weighted probing is also LP-based, where we make use of the elegant
abstraction of ``contention resolution schemes'' introduced by Chekuri
et al.~\cite{CVZ11} (see Section~\ref{sec:weighted}), which provides a
clean approach to rounding the LP.

The papers of Chawla et al.~\cite{CHMS10}, Yan~\cite{Y11}, and Kleinberg
and Weinberg~\cite{KW12} study the performance of Sequential Posted
Price Mechanisms (SPMs) for Bayesian single-parameter auctions, and
relate the revenue obtained by SPMs to the optimal (non-posted-price)
mechanism given by Myerson~\cite{M81}. Our algorithm for stochastic
probing also yields SPMs for Bayesian auctions where the feasible sets
of buyers are specified by, e.g., $k$-matroid intersection and
unsplittable flow on trees. Our proof relates an LP relaxation of the
optimal mechanism to the LP used for stochastic probing.  Linear
programs have been used to model optimal auctions in a number of
settings; e.g., see Vohra~\cite{V11}.  Bhattacharya et al.~\cite{BGGM10}
also used LP relaxations to obtain approximately optimal mechanisms in a
Bayesian setting with multiple items and budget constrained buyers.

\stocoption{}{
\subsection{Preliminaries} \label{sec:preliminaries}
}
\smallskip
{\bf Specifying Probing Algorithms.} A solution (policy) to the stochastic probing problem is an {\em adaptive}
strategy of probing elements satisfying the constraints imposed by \pot
and \pin.
At any time step $t\ge 1$, let $Q_t$ denote  the set of elements already probed and $S_t$ the current solution (initially $Q_1=S_1=\emptyset$); an element $e\in V\setminus Q_t$ can be probed at time $t$ if and only if $Q_t \cup \{e\}\in \pot$ and $S_t \cup\{e\}\in\pin$.
%Furthermore, 
If $e$ is probed then exactly one of the following happens:
\begin{OneLiners}
\item $e$ is active (with probability $p_e$), and $Q_{t+1} \gets Q_t \cup\{e\}$,
  $S_{t+1} \gets S_t \cup\{e\}$, or
\item $e$ is inactive (with probability $1-p_e$), and $Q_{t+1} \gets
  Q_t \cup\{e\}$, $S_{t+1} \gets S_t $.
\end{OneLiners}
Hence the policy is a decision tree with nodes representing elements that are
probed and branches corresponding to their random instantiations. Note that an optimal policy may
be exponential sized, and designing a polynomial-time algorithm requires
tackling the question of whether there exist poly-sized near-optimal
strategies. A {\em non adaptive} policy is simply given by a permutation on $V$, where elements are considered in this order and probed whenever feasible in both $\pot$ and $\pin$. The \emph{adaptivity gap} compares the best non-adaptive
policy to the best adaptive policy.

{\bf Packing Constraints.} We model packing constraints as {\em independence systems}, which are of
the form $(V,\,\I\sse 2^V)$ where $V$ is the \emph{universe} and $\I$ is
a collection of \emph{independent sets}. We assume $\I$ is
\emph{downwards closed}, i.e., $A\in\I$ and $B\sse A$ $\implies$
 $B\in\I$. Some examples are:
\begin{OneLiners}
\item \emph{Knapsack constraint}: each element $e\in V$ has size
  $s_e\in [0,1]$ and $\I = \{ A\sse V \mid \sum_{e\in A} s_e\le 1 \}$.
\item \emph{Matroid constraint}: an independence system $(V,\I)$ where
  for any subset $S\sse V$, every maximal independent subset of $S$ has
  the same size. See~\cite{Sch-book} for many properties and
  examples.
\item \emph{$k$-system:} an independence system $(V,\I)$ where for any
  subset $S\sse V$, every \emph{maximal} independent subset of $S$ has
  size at least $\frac1k$ times the size of the \emph{maximum}
  independent subset of $S$. For example: matroids are $1$-systems, matchings
  are $2$-systems, and intersections of $k$ matroids form $k$-systems.
\item {\em Unsplittable Flow Problem (UFP) on trees:} there is an  edge-capacitated tree $T$, and each element $e\in V$ corresponds to a path $P_e$ in $T$ and demand $d_e$. Subset $S\sse V$ is independent (i.e. $S\in \I$) iff $\{\mbox{path }P_e \mbox{ with demand }d_e\}_{e\in S}$ is routable in $T$. We assume the ``no-bottleneck'' condition, where the maximum demand $\max_{e\in V} d_e$ is at most the minimum capacity in $T$.
\end{OneLiners}

When the universe is clear from context, we refer to an independence
system $(V,\I)$ just as $\I$. We also make use of linear programming
relaxations for independence systems: the LP relaxation of $\I$ is
denoted by $\P(\I) \sse [0,1]^V$ and contains the convex hull of all
independent sets. (Since $\P(\I)$ is a relaxation it need not equal the
convex hull). For example: $\P(\I)=\left\{\mathbf{x}\in  [0,1]^V : \sum_{e\in V} s_e\cdot x_e\le 1\right\}$ for
knapsacks;  $\P(\I)= \left\{\mathbf{x}\in [0,1]^V : \sum_{e\in S} x_e \le
  r_\I(S),\forall S\sse V\right\}$ for matroids, where $r_\I(\cdot)$ denotes the rank function.

\stocoption{}{
\subsection{Outline}
We first consider the unweighted probing problem in Section~\ref{sec:unwtd}.
% and prove Theorem~\ref{thm:unit-wt}. 
Then, in Section~\ref{sec:weighted} we study the weighted probing problem. In Section~\ref{sec:spm} we present the application to posted price mechanisms for Bayesian auctions (Theorem~\ref{thm:spm}). Finally, in Section~\ref{sec:unwt-deadline} we study the generalization of unweighted probing to the setting of global time constraints. 
}
%%%%%%%%%%%%%%%%%%%%%%%%%%%%%%%%%%%%%%%%%%%%%%%%%%%%%%%%%%%%%%%%%%%%%%

\section{Unweighted Stochastic Probing}
\label{sec:unwtd}

In this section, we study the stochastic probing problem with unit
weights, i.e., $w_e=1$ for all $e\in V$. We assume the inner and outer
packing constraints are a $\kin$-system and a $\kot$-system,
repectively. We show that the greedy algorithm, which considers elements
in non-increasing order of their probabilities $p_e$ and probes them
when feasible, has performance claimed in Theorem~\ref{thm:unit-wt}. 
We give an LP-based dual-fitting proof of this result.

For brevity, let us use $k$ to denote $\kin$, and $k'$ to denote $\kot$.
Let the \emph{rank function} of $\pin$ be $r:2^V\rightarrow
\mathbb{N}$, where for each $S\sse V$, $r(S) = \max\{ |I| \mid I \in \I,
I \sse S\}$ be the \emph{maximum} size of an independent subset of $S$.  By
definition of $k$-systems, for any $S \sse V$, any maximal independent
set of $S$ (according to $\pin$) has size at least $r(S)/k$.  Similarly, let
$r':2^V\rightarrow \mathbb{N}$ denote the rank function of $\pot$.
We may not be able to evaluate the rank function, since this is
NP-complete for $k\ge 3$.
For any $T\sse V$, let $\spn(T) = \{e\in V :
r(T\cup\{e\})=r(T)\}$ be the {\em span} of $T$. Likewise, let $\spn'$
denote the span function for $\pot$. 
\begin{cl}\label{cl:psys-prop}
  For any $T\sse V$, the maximum independent subset of $T$ (which has size $r(T)$) is a maximal independent subset of $\spn(T)$.
  Hence, for $T\sse V$ and $R\sse V$, we have
  $r(\spn(T))\le k\cdot r(T)\le k\cdot |T|$ and $r'(\spn'(R))\le k'\cdot r'(R)\le k'\cdot |R|$.
\end{cl}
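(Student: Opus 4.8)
The plan is to prove the structural assertion first---that a maximum independent subset of $T$ stays maximal once we enlarge the ground set from $T$ to $\spn(T)$---and then read off the two chains of inequalities as an immediate corollary.

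\textbf{Step 1: the structural claim.} I would fix a maximum independent subset $I\sse T$, so that $I\in\pin$ and $|I|=r(T)$. The first observation is that $T\sse\spn(T)$, directly from the definition of span: each $e\in T$ satisfies $r(T\cup\{e\})=r(T)$. Hence $I$ is an independent subset of $\spn(T)$. To see that $I$ is \emph{maximal} in $\spn(T)$, suppose for contradiction that some $e\in\spn(T)\setminus I$ has $I\cup\{e\}\in\pin$. Since $I\cup\{e\}\sse T\cup\{e\}$, this would be an independent subset of $T\cup\{e\}$ of size $r(T)+1$, forcing $r(T\cup\{e\})\ge r(T)+1$; but $e\in\spn(T)$ means $r(T\cup\{e\})=r(T)$, a contradiction.

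\textbf{Step 2: the bounds.} Next I would invoke the $\kin$-system property with ground set $S:=\spn(T)$: every maximal independent subset of $\spn(T)$ has size at least $r(\spn(T))/k$. By Step~1, $I$ is one such maximal subset, so $r(T)=|I|\ge r(\spn(T))/k$, i.e.\ $r(\spn(T))\le k\cdot r(T)$; combined with the trivial bound $r(T)\le|T|$ this gives the first chain. The second chain, for $\pot$, $r'$, $\spn'$ and $k'$, follows by the identical argument applied to the outer system.

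\textbf{Expected difficulty.} There is no genuine obstacle: essentially all the content is the observation in Step~1 that passing from $T$ to $\spn(T)$ cannot make $I$ extendible, which is exactly the property span was defined to capture. The only point requiring care is in Step~2, where the $k$-system definition must be applied with respect to the ground set $\spn(T)$ rather than $T$, so that the word ``maximal'' in Step~1 matches ``maximal within $S$'' in the definition of a $k$-system.
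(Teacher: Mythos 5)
Your proof is correct and is exactly the argument the paper leaves implicit (the claim is stated without proof as an immediate consequence of the definitions of span and $k$-system). Both steps are sound, and your final caveat is the right one: the $k$-system property must be invoked with ground set $\spn(T)$, which is precisely what makes the maximality from Step~1 usable.
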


Let us write the natural LP relaxation and dual for the probing problem:
{\small \begin{equation*}
\begin{array}{lll|}
  \max& \sum_{e \in V} p_e y_e &\\
  \text{s.t.} &\sum_{e \in S} p_e y_e \leq r(S)&  \,\, \forall
  S\subseteq V \quad \\  
  &\sum_{e \in S} y_e \leq r'(S) & \,\, \forall S \subseteq V \\
  &\mathbf{y} \geq 0.&
\end{array}
\quad 
\begin{array}{lll}
  \min &\sum_S r(S)\,\alpha(S) +  \sum_S r'(S)\,\beta(S)  \\
\text{s.t.}&  p_e \, \sum_{S: e\in S} \alpha(S) + \sum_{S: e \in S} \beta(S) \ge p_e&  \,\, \forall e \in V\\
&  \alpha(S), \beta(S) \geq 0 & \,\, \forall S \sse V.
\end{array}
\end{equation*}
}%
Claim~\ref{cl:lp-opt} in the next section shows that this LP is a valid
relaxation. 
It is not known if these linear programs can be solved in polynomial
time for arbitrary $p$-systems $\pin$ and $\pot$; we use them only for
the analysis. Note that the greedy algorithm defines a non-adaptive
strategy.
Consider a sample path $\pi$ down the natural decision tree associated
with the above algorithm; it is completely defined by the randomness in
which elements are active. Let $\Pr[\pi]$ denote its probability, and
$Q_\pi, S_\pi$ be the sets probed and picked on taking this
path.

\begin{lemma}
  \label{lem:alg-value}
  If $\alg$ is the random variable denoting the number of elements
  picked, 
{\small  \[ \E[\alg] = \sum_{\pi} \Pr(\pi) \cdot |S_\pi| = \sum_{\pi} \Pr(\pi)
  \cdot \sum_{e \in Q_\pi} p_e. \]}
\end{lemma}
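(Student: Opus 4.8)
The plan is to peel off the two equalities separately. The first equality, $\E[\alg]=\sum_{\pi}\Pr(\pi)\cdot|S_\pi|$, is immediate from the law of total expectation: the sample paths $\pi$ index the leaves of the decision tree and hence partition the probability space, a path $\pi$ is traversed with probability $\Pr(\pi)$, and the number of elements picked along $\pi$ is by definition $|S_\pi|$.

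For the second equality I would argue element by element. Writing $|S_\pi|=\sum_{e\in V}\mathbf{1}[e\in S_\pi]$ and $\sum_{e\in Q_\pi} p_e=\sum_{e\in V} p_e\,\mathbf{1}[e\in Q_\pi]$, and taking expectations over $\pi$, by linearity it suffices to show that for each fixed $e\in V$ we have $\Pr[e\in S_\pi]=p_e\cdot\Pr[e\in Q_\pi]$. The key observation is that greedy is a non-adaptive strategy: it fixes, up front, an ordering of $V$ by non-increasing $p_e$, and when it reaches an element $e$ in this order, the already-probed set $Q_t$ and the current solution $S_t$ — and hence the decision of whether $e$ is feasible to probe (i.e.\ whether $Q_t\cup\{e\}\in\pot$ and $S_t\cup\{e\}\in\pin$) — are functions only of the active/inactive instantiations of the elements that precede $e$ in the ordering. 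Therefore the event $\{e\in Q_\pi\}$ is independent of the coin flip that decides whether $e$ itself is active. Since $e$ is picked precisely when it is both probed and active, independence of the element activities gives $\Pr[e\in S_\pi]=\Pr[e\in Q_\pi]\cdot p_e$; summing over $e$ finishes the argument.

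The only step needing care is the independence claim, namely making precise that the event ``$e$ is probed'' is measurable with respect to the randomness of the elements preceding $e$ only. This is exactly where the non-adaptivity of greedy is used — a fixed permutation, with all branching that occurs before $e$ is considered involving only elements before $e$ — and the claim would genuinely fail for an arbitrary adaptive policy that might return to probe $e$ at a later point conditioned on $e$'s own status. Beyond stating this cleanly, I do not anticipate any real obstacle.
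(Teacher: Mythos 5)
Your proof is correct and follows essentially the same route as the paper's: both reduce the second equality to the per-element identity $\Pr[e\in S_\pi]=p_e\cdot\Pr[e\in Q_\pi]$, justified by observing that the event ``$e$ is probed'' is determined by the outcomes $\pi_{<e}$ of the elements preceding $e$ in the greedy order and is therefore independent of $e$'s own activation. The paper phrases this via conditioning on $\pi_{<e}$ rather than via indicator sums, but the argument is the same.
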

\stocoption{}{\begin{proof}
  The first equality follows by definition of expectations, and the fact
  that elements are unweighted.  For the second, let $\pi_{< e}$ be the
  outcomes of elements before $e$ in the ordering. Note that the event
  $\mathbf{1}(e \text{ probed})$ is completely determined by $\pi_{<
    e}$. Moreover,
  \[ \Pr[ e \text{ picked} \mid \pi_{< e} ] = \mathbf{1}\left(e \text{
    probed}\mid \pi_{< e}\right) \cdot p_e~. \]
  Hence, the expected value of the algorithm is
{\small  \begin{align*}
    \E[\alg] &= \sum_{e} \sum_{\pi_{< e}} \Pr[\pi_{< e}] \cdot \Pr[ e
    \text{ picked} \mid \pi_{< e} ] = \sum_{e} \sum_{\pi_{< e}}
    \Pr[\pi_{< e}] \cdot \mathbf{1}\left(e \text{ probed}\mid \pi_{< e}\right) \cdot p_e \\
    &= \sum_{e} \sum_{\pi}
    \Pr[\pi] \cdot \mathbf{1}\left(e \text{ probed}\mid \pi_{< e}\right) \cdot p_e 
    = \sum_{e} \sum_{\pi}
    \Pr[\pi] \cdot \sum_{e \in Q_{\pi}}  p_e.
  \end{align*}}%
  Above, we used the fact that $e$'s being probed (or equivalently, it's
  lying in $Q_{\pi}$) was purely a function of $\pi_{< e}$. And that $e$
  being active is independent of all others.
\end{proof}}

\begin{lemma}
  \label{lem:dual}
  For each outcome $\pi$, there is a feasible dual of value at most
  $k |S_\pi| + k' \sum_{e \in Q_\pi} p_e$. Moreover, there is a
  feasible dual of value at least $(k+k') \E[\alg]$.
\end{lemma}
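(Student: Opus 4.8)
The plan is a dual-fitting argument in the style of Fisher--Nemhauser--Wolsey: I will build a feasible dual from the greedy trace along each sample path and then combine the per-path duals. Both assertions come from one construction --- the first claim from a ``lean'' dual, the second from the same dual scaled up and then averaged. To set up, I fix a path $\pi$ and list the probed set $Q_\pi=\{e_1,\dots,e_m\}$ in the order greedy probes them; since greedy scans $V$ in non-increasing $p_e$ order, $p_{e_1}\ge\cdots\ge p_{e_m}$. I write $Q^{(j)}=\{e_1,\dots,e_j\}$, set $p_{e_{m+1}}=0$, and record that $S_\pi\sse Q_\pi$ is independent in $\pin$ (so $r(S_\pi)=|S_\pi|$) while $Q^{(j)}\in\pot$ (so $r'(Q^{(j)})=j$).

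For the first (\emph{at most}) claim I would place all dual mass on spans, setting $\alpha(\spn(S_\pi))=1$ and $\beta(\spn'(Q^{(j)}))=p_{e_j}-p_{e_{j+1}}$ for each $j$, and zero elsewhere; non-negativity is immediate. The key step is checking the dual constraint for every $e\in V$ by examining the instant greedy scans $e$, with current sets $S_t,Q_t$: either $e$ gets probed, or it is skipped because $Q_t\cup\{e\}\notin\pot$ (whence $e\in\spn'(Q_t)\sse\spn'(Q_\pi)$), or because $S_t\cup\{e\}\notin\pin$ (whence $e\in\spn(S_t)\sse\spn(S_\pi)$). In the last case $\alpha$ already supplies $\sum_{S\ni e}\alpha(S)\ge1$; in the first two, nestedness of the sets $\spn'(Q^{(j)})$ makes the $\beta$-mass telescope to $p_{e_{j^\star}}$ for the least index $j^\star$ with $e\in\spn'(Q^{(j^\star)})$, and this is $\ge p_e$ because every element of $Q_t$ precedes $e$ in the scan. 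I would then read off the value as $r(\spn(S_\pi))+\sum_j(p_{e_j}-p_{e_{j+1}})\,r'(\spn'(Q^{(j)}))$ and bound it using Claim~\ref{cl:psys-prop} (so $r(\spn(S_\pi))\le k|S_\pi|$ and $r'(\spn'(Q^{(j)}))\le k'j$) together with the telescoping identity $\sum_j(p_{e_j}-p_{e_{j+1}})j=\sum_{e\in Q_\pi}p_e$, giving value at most $k|S_\pi|+k'\sum_{e\in Q_\pi}p_e$.

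For the second (\emph{at least}) claim I would scale this dual, passing to $(k\,\alpha,\ k'\,\beta)$. Since $k,k'\ge1$, the left-hand side of each dual constraint only increases, so feasibility is preserved; and now the \emph{easy} rank inequalities $r(\spn(S_\pi))\ge r(S_\pi)=|S_\pi|$ and $r'(\spn'(Q^{(j)}))\ge r'(Q^{(j)})=j$ push the value the other way, to at least $k|S_\pi|+k'\sum_{e\in Q_\pi}p_e$. Finally I would average the scaled per-path duals with weights $\Pr(\pi)$: the dual feasible region is convex and the objective linear, so the average is feasible with value at least $\sum_\pi\Pr(\pi)\big(k|S_\pi|+k'\sum_{e\in Q_\pi}p_e\big)$, which by Lemma~\ref{lem:alg-value} (both terms equal $\E[\alg]$) is at least $(k+k')\,\E[\alg]$, as required.

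I expect the only real obstacle to be the feasibility verification in the first claim: an unprobed element can be blocked by \emph{either} packing constraint, and in the outer case one must certify that the telescoped $\beta$-mass it collects is at least $p_e$. This is exactly where greedy's decreasing-$p_e$ scan order enters, through the inequality $p_{e_{j^\star}}\ge p_e$. The remaining pieces --- the value estimates via Claim~\ref{cl:psys-prop}, and the scaling and averaging used for the second claim --- are routine once the lean dual is in hand.
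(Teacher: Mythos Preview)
Your argument for the first claim is exactly the paper's: the same dual assignment $\alpha(\spn(S_\pi))=1$, $\beta(\spn'(Q^{(j)}))=p_{e_j}-p_{e_{j+1}}$, the same three-case feasibility check (probed / blocked by $\pot$ / blocked by $\pin$), and the same value bound via Claim~\ref{cl:psys-prop} together with the telescoping identity.

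For the second claim you diverge. The paper does \emph{not} scale; it simply takes the convex combination $\sum_\pi \Pr[\pi]\,(\boldsymbol{\alpha}_\pi,\boldsymbol{\beta}_\pi)$ of the per-path duals from the first part and asserts this averaged dual has ``value $k\,\E[|S_\pi|]+k'\,\E[\sum_{e\in Q_\pi}p_e]=(k+k')\E[\alg]$''. Strictly speaking, the first part only gives that each per-path dual value is \emph{at most} $k|S_\pi|+k'\sum_{e\in Q_\pi}p_e$, so averaging yields a feasible dual of value \emph{at most} $(k+k')\E[\alg]$ --- and this upper bound is the direction actually used (via weak duality) to conclude the $(k+k')$-approximation. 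Your scaling step $(k\alpha,k'\beta)$ together with the reverse rank inequalities $r(\spn(S_\pi))\ge|S_\pi|$ and $r'(\spn'(Q^{(j)}))\ge j$ is a correct and clean way to obtain the literal ``at least'' bound stated in the lemma, but this direction is not what the paper's proof establishes nor what the application needs; the ``at least'' in the lemma statement appears to be a slip for ``at most''.
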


The following proof is similar to that of Fisher et al.~\cite{FNW78}
showing that the greedy algorithm is a $k$-approximation for the
intersection of $k$ matroids.

\begin{proof}
  Let $A = \spn(S_\pi)$ be the span of the set of picked elements $S_\pi$; note that by Claim~\ref{cl:psys-prop}, $r(A)\le k\cdot |S_\pi|$. We set $\alpha(A) = 1$, and all other $\alpha$ variables to zero.

Let the set of probed elements $Q_\pi = \{a_1, a_2, \ldots, a_\ell\}$ in
  this order. Define
  \[ \beta(\spn'(\{a_1, a_2, \ldots, a_h\})) := p_{a_h} - p_{a_{h+1}}
  \geq 0 \] for all $h \in \{1, \ldots, \ell\}$ (where we imagine
  $p_{a_{\ell+1}} = 0$). This is also well-defined since every subset of
  $Q_\pi$ is independent in $\pot$. The non-negativity follows from the
  greedy algorithm that probes elements in decreasing probabilities. The dual objective value equals:
{\small $$r(A) + \sum_{h = 1}^\ell  r'(\spn'(\{a_1, a_2, \ldots, a_h\}))
  \cdot (p_{a_h} - p_{a_{h+1}}) \,\,\le \,\, k \cdot |S_\pi| + \sum_{h =
    1}^\ell k'\cdot h \cdot (p_{a_h} - p_{a_{h+1}}),$$}% 
which is $k\cdot |S_\pi| + k' \sum_{e \in Q_\pi} p_e$. The inequality is by Claim~\ref{cl:psys-prop}. Next we show that the  dual solution is feasible. The non-negativity is clearly satisfied, so  it remains to check feasibility of the dual covering constraints.  For  any $e \in V$, %there are three cases:
  \begin{OneLiners}
  \item Case I: $e \in Q_\pi$. Say $e = a_g$ in the ordering of the set
    $Q_\pi$ . Then $e$ lies in $\spn'(\{a_1, a_2, \ldots, a_h\})$ for
    all $h \geq g$. Hence, the left hand side
    of $e$'s covering constraint contributes at least
    \[ \sum_{h = g}^\ell \beta(\spn'(\{a_1, a_2,
    \ldots, a_h\})) =   \sum_{h = g}^\ell (p_{a_h} -
    p_{a_{h+1}}) = p_{a_g} = p_e. \]
  \item Case II: $e \not\in Q_\pi$ because of the outer constraint. Say
    $e$ was seen when the $Q$ set was $\{a_1, a_2, \ldots, a_g\}$. Then $e \in
    \spn'(\{a_1, a_2, \ldots, a_h\})$ for all $h \geq g$. In this case,
    the left hand side contributes at least
    \[ \sum_{h = g}^\ell \beta(\spn'(\{a_1, a_2, \ldots, a_h\})) =
    \sum_{h = g}^\ell (p_{a_h} - p_{a_{h+1}}) = p_{a_g} \geq p_e. \]
    Here we used the fact that elements are considered in decreasing
    order of their probabilities.
  \item Case III: $e \not\in Q_\pi$ because of the inner constraint. Then $e \in
    \spn(S_\pi) = A$, and hence the $p_e \, \sum_{S: e \in S}
    \alpha(S) = p_e\, \alpha(A) = p_e$.
  \end{OneLiners}
  This proves the first part of the lemma. Taking expectations over
  $\pi$, the resulting convex combination $\sum_\pi \Pr[\pi]
  (\boldsymbol{\alpha}_\pi, \boldsymbol{\beta}_\pi)$ of these feasible
  duals is another feasible dual of value $k\, \E[|S_\pi|] + k'\, \E[ \sum_{e
    \in Q_\pi} p_e ]$, which by Lemma~\ref{lem:alg-value} equals
  $(k+k')\E[\alg]$. 
\end{proof}

Our analysis for the greedy algorithm is tight. In particular, if all
$p_e$'s equal one, and the inner and outer constraints are intersections
of (arbitrary) partition matroids, then we obtain the greedy algorithm
for $(\kin+\kot)$-dimensional matching. The approximation ratio in this
case is known to be exactly $\kin+\kot$. 

\stocoption{}{
\paragraph{Application to Unweighted Stochastic Matching.} When the
inner constraint is matching (which is a $2$-system) and the outer
constraint is $b$-matching (also a $2$-system) on the same graph, we
obtain the unweighted stochastic matching problem of Chen et
al.~\cite{CIKMR09}. Hence Theorem~\ref{thm:unit-wt} gives an alternate
proof of greedy being a $4$-approximation~\cite{CIKMR09}. 
We know now that greedy is a $2$-approximation~\cite{A11}, but we
currently do not know an LP-based proof of this bound.
}

%%%%%%%%%%%%%%%%%%%%%%%%%%%%%%%%%%%%%%%%%%%%%%%%%%%%%%%%%%%%%%
%%%%%%%%%%%%%%%%%%%%%%%%%%%%%%%%%%%%%%%%%%%%%%%%%%%%%%%%%%%%%%
%%%%%%%%%%%%%%%%%%%%%%%%%%%%%%%%%%%%%%%%%%%%%%%%%%%%%%%%%%%%%%

\section{Weighted Stochastic Probing}
\label{sec:weighted}

We now turn to the general weighted case of stochastic probing. Here the
natural combinatorial algorithms perform poorly, so we use linear
programming relaxations of the problem, which we round to get
non-adaptive policies.  Given an instance of the stochastic probing
problem with inner constraints $(V,\pin)$ and outer constraints
$(V,\pot)$, we use the following LP relaxation:
\begin{align*}
\max ~~~& \textstyle \sum _{e\in V} w_e\cdot x_e & \\
s.t. ~~~& x_e = p_e\cdot y_e \qquad \forall e\in V &  ({\cal{LP}})\\
        & x\in \p(\pin)  &  \\
 & y\in \p(\pot)  &  
\end{align*}
We assume that the LP relaxations of the inner and outer constraints can
be solved efficiently: this is true for matroids, knapsacks, UFP on
trees, and their intersections. For general $k$-systems, it is not known
if this LP can be solved exactly. However, using the fact that the
greedy algorithm achieves a $\frac{1}{k}$-approximation for
maximizing linear objective functions over $k$-systems (even with
respect to the LP relaxation, which follows from~\cite{FNW78}, or the
proof of Lemma~\ref{lem:dual}), \stocoption{ and the equivalence of separation and
optimization, we can obtain a $\frac{1}{k}$-approximate LP solution when
\pin and \pot are $k$-systems.}{ and the equivalence of approximate separation and
optimization~\cite{J03}, we can obtain a $\frac{1}{\max\{\kin,\kot\}}$-approximate LP solution when \pin and \pot are arbitrary $\kin$ and $\kot$ systems. }

\begin{cl}\label{cl:lp-opt}
  The optimal value of (\lp) $\ge$ optimal value of the 
  probing instance.
\end{cl}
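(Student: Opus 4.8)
The plan is to exhibit, given the optimal adaptive policy, a feasible point of $(\lp)$ whose objective equals the policy's expected reward; since $(\lp)$ is a maximization, this proves the claim. Let $\pi$ range over the root-to-leaf sample paths of the optimal decision tree, with $\Pr[\pi]$, $Q_\pi$, $S_\pi$ as defined in the excerpt, and set
\[ y_e \;:=\; \sum_{\pi\,:\,e\in Q_\pi}\Pr[\pi] \quad(\text{the probability that $e$ is ever probed}), \qquad x_e \;:=\; \sum_{\pi\,:\,e\in S_\pi}\Pr[\pi] \quad(\text{the probability that $e$ is picked}). \]
Because elements are picked exactly when probed-and-active and the reward is additive, the expected value of the policy is $\sum_{e\in V}w_e\,\Pr[e\text{ picked}]=\sum_{e\in V}w_e x_e$. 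So it suffices to verify that $(\x,\y)$ satisfies the three constraint families of $(\lp)$.

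First, the equality $x_e=p_e y_e$: along any sample path, whether $e$ gets probed is a function only of the instantiations of the elements probed strictly before $e$ on that path (this is precisely what the decision-tree description of a policy says), and by independence the active/inactive coin of $e$ is independent of all of those. Since in the query-and-commit model $e$ is picked iff it is probed and turns out active, we get $\Pr[e\text{ picked}]=\Pr[e\text{ probed}]\cdot p_e$, i.e.\ $x_e=p_e y_e$; this is the same observation underlying the proof of Lemma~\ref{lem:alg-value}. Second, $\y\in\p(\pot)$: in every instantiation the probing constraint forces $Q_\pi\in\pot$, so the indicator vector $\mathbf{1}_{Q_\pi}$ lies in $\p(\pot)$, since $\p(\pot)$ contains the convex hull of the independent sets of $\pot$; as $\y=\sum_\pi\Pr[\pi]\cdot\mathbf{1}_{Q_\pi}$ is a convex combination of such vectors and $\p(\pot)$ is convex, $\y\in\p(\pot)$. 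Third, $\x\in\p(\pin)$ by the identical averaging argument, using $S_\pi\in\pin$ for every $\pi$. Together these three facts say $(\x,\y)$ is $(\lp)$-feasible with objective equal to the optimal policy's value.

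The only genuine subtlety is the conditional-independence step in the first item: one must take care that ``$e$ is probed'' really is measurable with respect to the outcomes that strictly precede $e$ on the path, so that it is independent of $e$'s own coin. This is immediate from the decision-tree description of an adaptive policy, but it is the one place where the structure of the model (that each element is probed at most once, and its instantiation is revealed only upon probing) is actually used. Everything else is routine: the definition of $\p(\cdot)$ as a convex relaxation containing the indicator of every independent set, plus averaging over sample paths.
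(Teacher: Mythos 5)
Your proof is correct and follows essentially the same route as the paper's: define $x_e,y_e$ as the marginal probabilities of $e$ being chosen and probed under the optimal policy, use independence of $e$'s coin from the decision to probe it to get $x_e = p_e y_e$, and use convexity of $\p(\pin),\p(\pot)$ together with $S_\pi\in\pin$, $Q_\pi\in\pot$ in every instantiation to get feasibility. The only difference is that you spell out the averaging-over-sample-paths step that the paper leaves implicit.
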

\stocoption{}{\begin{proof}
  Let $y^*_e$ denote the probability that element $e$ is probed by the
  optimal strategy; i.e., $y^*_e= \Pr[e\in Q^*]$. Also let $x^*_e$
  denote the probability that element $e$ is chosen in the final
  solution, $x^*_e= \Pr[e\in S^*]$. Due to the constraints, we have $Q^*
  \in \pot$ and $S^* \in \pin$, and hence $y^*\in \p(\pot)$ and $x^*\in
  \p(\pin)$. Moreover, 
  %for any element $e\in V$,
  \[ x^*_e = \Pr[e\in S^*] = \Pr[e\in Q^* \mbox{ and }e \mbox{ active}]
  = p_e \cdot
  \Pr[e\in Q^*]=p_e\cdot y^*_e, \quad \forall e\in V.
  \]
  Here we used the fact that the probability of element $e$ being active
  is independent of the past decisions, and in particular, of the
  optimal strategy's decision to probe $e$.  Thus $(x^*,y^*)$ is a
  feasible solution to \lp. Finally, the optimal value of the probing
  problem instance is $\sum_{e\in V} w_e\cdot \Pr[e\in S^*] = \sum_e w_e
  x^*_e$, which is the
  LP objective value of $(x^*,y^*)$.
\end{proof}
}

\stocoption{}{\subsection{Contention-Resolution Schemes} \label{sec:cr-schemes}}

Given a solution $(x,y)$ for the LP relaxation, we need to get a policy
from it. Our rounding algorithm is based on the elegant abstraction of
{\em contention resolution schemes (CR schemes)}, as defined in Chekuri
et al.~\cite{CVZ11}. Here is the formal definition, and the main theorem
we will use.
\begin{definition}\label{defn:cr}
  An independence system $(V,{\cal J}\sse 2^V)$ with LP-relaxation
  $\p({\cal J})$ admits a monotone $(b,c)$ {\bf CR-scheme} if, for any $z\in \p({\cal
    J})$ there is a (possibly randomized) mapping $\pi: 2^V\rightarrow
  {\cal J}$ such that:
  \begin{OneLiners}
  \item[(i)] If $I\sse V$ is a random subset where each
    element $e\in V$ is chosen independently with probability $b\cdot x_e$, 
    $\Pr_{I,\pi}[e\in \pi(I) \mid e\in I]\ge c$ for all $e\in V$.
  \item[(ii)] For any $e\in I_1\sse I_2\sse V$, $\Pr_{\pi}[e\in \pi(I_1)] \ge
    \Pr_{\pi}[e\in \pi(I_2)]$.
  \item[(iii)] The map $\pi$ can be computed in polynomial time.
\end{OneLiners}
Moreover, $\pi:2^V\rightarrow {\cal J}$ is a $(b,c)$ {\bf ordered CR-scheme} if there is a (possibly random) permutation $\sigma$ on $V$ so that for each $I\sse V$, $\pi(I)$ is the maximal independent subset of $I$ obtained by considering elements in the order of $\sigma$.
\end{definition}

%With this formalism, the following result is known.
\begin{theorem}[\cite{CVZ11,CMS07,BKNS10,CHMS10}]\label{thm:cr-schemes}
There are monotone CR-schemes for the following 
independence systems 
(below, $0<b\le 1$ is any value unless specified otherwise) 
  \begin{OneLiners}
  \item  $(b,\,(1-e^{-b})/b)$ CR-scheme for matroids. %(any $0<b\le 1$).
  \item $(b, 1-k\cdot b)$ ordered CR-scheme for $k$-systems. % and $k$-matchoid (any $0<b\le 1$).
  \item $(b, 1-6b)$ ordered CR-scheme for unsplittable flow on trees, with the ``no bottleneck'' assumption, for any $0<b\le 1/60$.
  \item $(b, 1-2kb)$ CR-scheme for $k$-column sparse packing integer programs. % (any $0<b\le 1$).
%  \item $(1-\epsilon, 1-\epsilon)$ CR-scheme for constant number of knapsack constraints, for any fixed $\epsilon>0$.
  \end{OneLiners}
\end{theorem}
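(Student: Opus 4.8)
The plan is to prove the four items separately; each is (a restatement of) a result from~\cite{CVZ11,CMS07,BKNS10,CHMS10}, so the work is to recall the relevant construction and to check that its guarantee matches the normalization in Definition~\ref{defn:cr}. In every case, fix $z\in\p(\J)$ and $0<b\le 1$, let $I\sse V$ be the random set that contains each $e$ independently with probability $b\cdot z_e$, and recall that we must produce a map $\pi:2^V\to\J$ with $\pi(I)\sse I$, $\Pr[e\in\pi(I)\mid e\in I]\ge c$, and monotonicity property~(ii). Since each LP relaxation $\p(\cdot)$ on the list is downward closed and $0\le b\cdot z\le z$ coordinatewise, we have $b\cdot z\in\p(\J)$; so the sample $I$ is ``feasible in expectation'', and the content of each case is to repair it to an honest independent set $\pi(I)$ without losing much per element.

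For the two \emph{ordered} schemes ($k$-systems and UFP on trees) the plan is to take $\pi$ to be the greedy map itself: fix a (possibly random) permutation $\sigma$ and let $\pi(I)$ be the maximal independent subset of $I$ obtained by scanning in the order $\sigma$. Monotonicity~(ii) is then automatic, since deleting elements from the input can only cause greedy to keep a (weakly) larger prefix of the surviving elements. What remains is to bound the rejection probability $\Pr[e\notin\pi(I)\mid e\in I]$. For a $k$-system one uses the system axioms together with $z\in\p(\J)$ to show that when greedy reaches a sampled element it is blocked with probability at most $k\cdot b$, as in~\cite{CVZ11}. For UFP on trees under the no-bottleneck assumption the plan is to follow~\cite{CMS07,BKNS10}: split the paths into a ``small-demand'' regime, in which an edge of $P_e$ is over-congested only if the total demand of earlier-selected conflicting paths through it exceeds its capacity (an event controlled by the capacity inequalities encoded in $z\in\p(\J)$ via Markov's inequality), and a ``large-demand'' regime, which behaves essentially like a matching constraint; summing the failure probabilities over the edges of $P_e$ and the two regimes gives $\Pr[e\notin\pi(I)\mid e\in I]\le 6b$ provided $b\le 1/60$.

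For \emph{matroids} the greedy ordered scheme gives only $c=1-b$, which is vacuous as $b\to 1$, so here the plan is to use the matroid-specific CR-scheme from~\cite{CVZ11} instead. The key point is that, because $b\cdot z$ lies in the matroid polytope $\p(\M)$, one can arrange that a sampled element $e$ is dropped only when a circuit of $\M$ inside $I$ passes through $e$; the probability of that event is maximized by a uniform matroid with a uniform fractional point, where an occupancy (balls-into-bins) computation yields $\Pr[e\in\pi(I)\mid e\in I]\ge(1-e^{-b})/b$, and monotonicity holds because enlarging $I$ only creates more circuits through $e$. For \emph{$k$-column-sparse packing integer programs} $\{\x\in\{0,1\}^V:A\x\le\mathbf{1}\}$ the plan is to sample $I$ and then run a monotone \emph{alteration}: for each violated row of $A$, delete a prescribed (monotonically chosen) subset of the participating sampled elements so as to restore feasibility. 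An element $e$ then survives unless one of the $\le k$ rows through it is over-full; since $z\in\p(\J)$ gives $\sum_j a_{ij}z_j\le 1$ for each row $i$, Markov's inequality bounds the over-fullness probability of each such row by $O(b)$, so a union bound over the $\le k$ rows through $e$ gives $c\ge 1-2kb$, with monotonicity built into the alteration rule.

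I expect the main obstacle, uniformly across the four cases, to be meeting two requirements at once: keeping the map $\pi$ \emph{monotone} in the sense of~(ii) while still certifying the claimed per-element survival probability~$c$. For the two ordered schemes this tension disappears (greedy is trivially monotone), so the work there is just the occupancy/Markov estimates of~$c$; for the matroid and PIP schemes the real effort --- done in the cited papers --- is in designing $\pi$ (respectively the alteration) so that it is \emph{simultaneously} monotone and near-optimal. A secondary, bookkeeping-level point is to align the sampling normalization of Definition~\ref{defn:cr} (where $e\in I$ with probability exactly $b\cdot z_e$) with the somewhat differently scaled statements in~\cite{CVZ11,CMS07,BKNS10,CHMS10}, rescaling $b$ where needed.
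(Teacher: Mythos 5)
The paper gives essentially no proof of this theorem --- it is imported wholesale from the cited works, the only added content being the one-line remark that the $k$-system scheme ``can be inferred from Lemma~4.12 in~\cite{CVZ11} using the observation that $r(\spn(R))\le k\cdot|R|$.'' So your plan of reconstructing all four constructions goes beyond what the paper does; for the matroid, UFP, and column-sparse PIP items your sketches point at the right constructions and the right occupancy/Markov estimates.

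There is, however, a genuine error in your treatment of the two \emph{ordered} schemes: you assert that monotonicity (property~(ii) of Definition~\ref{defn:cr}) is ``automatic'' for the greedy map because deleting input elements can only cause greedy to keep more of the survivors. That is true for matroids, where an element is rejected iff it lies in the span of \emph{all} earlier surviving elements (a monotone condition), but it fails for general $k$-systems --- indeed already for the intersection of two partition matroids. Take $V=\{g,f,e\}$ with the fixed order $g,f,e$; let $M_1$ have unit-capacity classes $\{g,f\},\{e\}$ and $M_2$ have unit-capacity classes $\{f,e\},\{g\}$, so the independent sets of the intersection are $\emptyset$, the singletons, and $\{g,e\}$. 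On $I_2=\{g,f,e\}$ greedy accepts $g$, rejects $f$, then accepts $e$; on $I_1=\{f,e\}\sse I_2$ greedy accepts $f$ and rejects $e$. Hence $\Pr[e\in\pi(I_1)]=0<1=\Pr[e\in\pi(I_2)]$, violating~(ii); a uniformly random order does not rescue this instance either (the two probabilities become $1/2$ and $2/3$). The same phenomenon occurs for greedy UFP on trees, since removing one path frees capacity that admits a different path which can then block $e$ on another edge. So the tension you correctly flag for the matroid and PIP cases --- being simultaneously monotone and near-lossless --- does not ``disappear'' for the ordered schemes; it is precisely the delicate point there, and it is needed downstream (Lemma~\ref{lem:main} invokes property~(ii) of the inner ordered scheme). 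Your proposal must either establish monotonicity for the specific ordering the cited schemes use, or import it explicitly as part of the cited results, rather than deriving it from greediness alone.
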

%The ``no-bottleneck'' assumption in the unsplittable flow problem
%requires that the maximum demand value is at most the minimum capacity
%of any edge in the graph.
\stocoption{}{The CR-scheme for $k$-systems can be inferred from Lemma~4.12 in~\cite{CVZ11} using the observation that $r(\spn(R))\le k\cdot |R|$ for any $R\sse V$ in a $k$-system.}

\stocoption{}{\subsection{How to Round the LP Solution} \label{sec:algo}}

Given the formalism of CR schemes, we can now state our main result for
rounding a solution to the relaxation (\lp).

\begin{theorem}\label{thm:stoc-probe}
  Consider any instance of the stochastic probing problem with
  \begin{OneLiners}
  \item[(i)] $(b,c_{out})$ CR-scheme for 
  %the LP relaxation 
  $\p(\pot)$.
%    of the outer constraint. % $\pot$, and
  \item[(ii)] Monotone $(b,c_{in})$ {\bf ordered} CR-scheme for 
  %the LP
     $\p(\pin)$. 
     %of the inner constraint.
  \end{OneLiners}
  Then there is a $b\cdot \left(\,c_{out}+c_{in}-1\,\right)$-approximation
  algorithm for the weighted stochastic probing problem.
\end{theorem}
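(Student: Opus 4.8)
The plan is to combine the two CR-schemes to build a non-adaptive probing policy directly from an (approximately) optimal LP solution $(x,y)$, and then argue that each element survives into the final chosen set with probability at least $b\cdot(c_{out}+c_{in}-1)$ times its LP value $x_e$. The policy is the natural one: take the random permutation $\sigma$ guaranteed by the \emph{ordered} CR-scheme for $\p(\pin)$ (item (ii)), and scan the elements of $V$ in the order $\sigma$. When we reach element $e$, we first toss an independent coin and ``nominate'' $e$ with probability $b\cdot y_e$; if $e$ is nominated, we probe it whenever doing so keeps the probed set in $\pot$ \emph{and} the chosen set in $\pin$. Here the feasibility-in-$\pot$ test is exactly the filter performed by the (not-necessarily-ordered) CR-scheme for $\p(\pot)$ applied to the random set of nominated elements, and the feasibility-in-$\pin$ test is automatically the greedy maximal-independent-set filter of the ordered CR-scheme for $\p(\pin)$ applied in order $\sigma$.

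First I would check that the inputs to the two CR-schemes are legitimate. Since $(x,y)$ is feasible for (\lp), we have $y\in\p(\pot)$, so the $(b,c_{out})$ CR-scheme may be invoked with $z=y$; likewise $x\in\p(\pin)$ and $x_e=p_e y_e$, so the ordered $(b,c_{in})$ CR-scheme may be invoked with $z=x$. Next, fix an element $e$ and analyze $\Pr[e\in S]$, where $S$ is the final chosen set. Condition on the coin tosses so that $e$ is nominated (probability $b\cdot y_e$) and on $e$ being active (probability $p_e$, independent of everything else). Conditioned on this, $e$ ends up probed iff at the moment $e$ is considered, (a) adding $e$ to the already-probed set keeps it in $\pot$, and (b) adding $e$ to the already-chosen set keeps it in $\pin$; and once probed and active, $e$ is chosen. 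So I need a lower bound on the probability that both (a) and (b) hold.

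The key step — and the main obstacle — is to untangle the two filters, which act on the \emph{same} random nominated set and are \emph{not} independent, using a union-bound over the complementary bad events. Let $\bar A_e$ be the event that $e$ fails the outer test and $\bar B_e$ the event that $e$ fails the inner test. I claim $\Pr[\bar A_e \mid e\text{ nominated}]\le 1-c_{out}$ and $\Pr[\bar B_e\mid e\text{ nominated}]\le 1-c_{in}$; this is exactly where the monotonicity of the CR-schemes is used, since the actual set of nominated elements that $e$ ``competes against'' is a random set in which each $e'$ appears independently with probability $b\cdot y_{e'}\le b\cdot y_{e'}$ (for the outer) resp. $b\cdot y_{e'}\ge b\cdot x_{e'}$ — I would invoke property (i) of each scheme on the set $I$ where each element is present with its nomination probability, and property (ii) to handle the fact that we are restricting attention to the prefix of $\sigma$ and to nominated-and-active sub-events (both of which only shrink the competing set, hence by monotonicity only increase $e$'s survival probability). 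Given the two marginal bounds, $\Pr[\bar A_e\cup\bar B_e\mid e\text{ nominated}]\le (1-c_{out})+(1-c_{in})$, so $e$ passes both tests with probability at least $c_{out}+c_{in}-1$. Therefore $\Pr[e\in S]\ge b\cdot y_e\cdot p_e\cdot(c_{out}+c_{in}-1)=b\cdot(c_{out}+c_{in}-1)\cdot x_e$, and summing $w_e\Pr[e\in S]$ over $e$ gives $b\cdot(c_{out}+c_{in}-1)$ times the LP value, which by Claim~\ref{cl:lp-opt} is at least that fraction of the optimal probing value. The only remaining care is that the policy is genuinely implementable as an adaptive (indeed non-adaptive) probing strategy respecting the commit rule — but that holds by construction, since the tests (a) and (b) are precisely the running feasibility conditions $Q_t\cup\{e\}\in\pot$ and $S_t\cup\{e\}\in\pin$ from the model.
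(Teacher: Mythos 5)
Your proposal is correct and follows essentially the same route as the paper: sample $I$ with marginals $b\,y_e$, pass it through $\pi_{out}$, run the inner ordered scheme online in the order $\sigma$, and lower-bound $\Pr[e\in S]$ by property (i) of the outer scheme minus the inner failure event, using monotonicity to pass from $I\cap J$ (which has exactly the marginals $b\,x_e$) to the smaller set $\pi_{out}(I)\cap J$. The one point to state more carefully is that the outer filter must be the map $\pi_{out}$ applied offline to the whole nominated set (legitimate since nominations are independent of probe outcomes), rather than the running feasibility test $Q_t\cup\{e\}\in\pot$ evaluated in the order $\sigma$ --- the $c_{out}$ guarantee is a property of $\pi_{out}$ itself, not of greedy feasibility checks in an unrelated order.
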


Before we prove Theorem~\ref{thm:stoc-probe}, we observe that combining
Theorems~\ref{thm:stoc-probe} and~\ref{thm:cr-schemes} gives us,
for example:
\begin{itemize}
\item a $1/(4(k+\ell))$-approximation algorithm when the inner and
  outer constraints are intersections of $k$ and $\ell$ matroids
  respectively. % (hence proving Theorem~\ref{thm:main2-specialcase}),
\item an $\Omega(1)$-approximation algorithm when the inner and outer
  constraints are unsplittable flows on trees/paths satisfying the
  no-bottleneck assumption.
\stocoption{}{\item an $\Omega\left(1/(k+\ell)^2\right)$-approximation algorithm when the inner and outer constraints are arbitrary $k$ and $\ell$ systems. Here, we lose an additional $k+\ell$ factor in solving \lp approximately.}
\end{itemize}

{\bf The Rounding Algorithm.} Let $\pi_{out}$ denote the randomized mapping corresponding to a
$(b,c_{out})$ CR-scheme for $y\in \p(\pot)$, and $\pi_{in}$ be that
corresponding to a $(b,c_{in})$ CR-scheme for $x\in \p(\pin)$. The
algorithm to round the LP solution $(x,y)$ for weighted stochastic
probing appears as Algorithm~\ref{alg:wt-probe}.

%Given these, here's the algorithm for the probing problem:
\begin{algorithm}[h!] \caption{Rounding Algorithm for Weighted Probing\label{alg:wt-probe}} 
  \begin{algorithmic}[1]
    \STATE   \label{item:1} Pick $I\sse 2^V$ by choosing each $e\in V$
  independently with probability $b\cdot y_e$.
    \STATE   \label{item:2} Let $P=\pi_{out}(I)$. {\small (By definition of the CR
  scheme, $P \in \pot$ with probability one.)}
    \STATE   \label{item:3} Order elements in $P$ according to $\sigma$ (the inner ordered CR scheme) to get $e_1,e_2,\ldots,e_{|P|}$.
    \STATE   \label{item:4} Set $S\gets \emptyset$.

    \FOR{$i=1,\ldots,|P|$}
    \IF{$(S\cup \{e_i\} \in \pin)$ }
 \STATE \label{item:5} Probe $e_i$: set $S\gets S\cup \{e_i\}$ if $e_i$ is active, and $S\gets S$ otherwise.
\ENDIF     
    \ENDFOR
  \end{algorithmic}
\end{algorithm}

%%%%%%%%%%%%%%%%%%%%%%%%%%%%%%%%%%%%%%%%%%%%%%%%
%\subsubsection{The Analysis} \label{sec:analysis}

{\bf The Analysis.} We now show that $\E[w(S)]$ is large compared to the LP value $\sum_e
w_e x_e$. To begin, a few observations about this algorithm. Note that
this is a randomized strategy, since there is randomness in the choice
of $I$ and maybe in the maps $\pi_{out}$ and $\pi_{in}$. Also, by the CR
scheme properties, the probed elements are in $\pot$, and the chosen
elements in $\pin$. Finally, having chosen the set $P$ to (potentially)
probe, the elements actually probed in step~\ref{item:5} 
relies on the {\em ordered} CR scheme for the inner constraints. 
\stocoption{}{In Appendix~\ref{sec:bad-eg} we show that some simpler rounding algorithms that work for stochastic matching do not apply in this more general setting.}

Recall that $I\sse V$ is the random set where each element $e$ is
included independently with probability $b\cdot y_e$; also
$P=\pi_{out}(I)$. Let $J\sse V$ be the set of active elements; i.e.,
each $e \in V$ is present in $J$ independently with probability $p_e$.
The set of chosen elements is now $S=\pi_{in}(P\cap J)$. The main lemma
is now:
\begin{lemma}
  \label{lem:main} For any $e\in V$,
  \[ \Pr_{I,\pi_{out},\,J,\, \pi_{in} } \,\, \left[ e\in
    \pi_{in}\left( \pi_{out}(I) \cap J\right) \right] \quad \ge \quad b\cdot
  (c_{out}+c_{in}-1) \cdot x_e, \] where $b, c_{out}, c_{in}$ are
  parameters given by our CR-schemes.
\end{lemma}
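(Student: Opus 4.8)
The plan is to fix an element $e \in V$ and lower-bound the probability that $e$ survives all three random filters — being placed in $I$, surviving $\pi_{out}$, being active (in $J$), and surviving the ordered inner CR-scheme $\pi_{in}$ — by conditioning on the event $e \in I$ and $e \in J$, which happens with probability $(b\cdot y_e)\cdot p_e = b\cdot x_e$ since the two draws are independent and independent of everything else. So it suffices to show
\[
\Pr\left[\, e\in \pi_{in}(\pi_{out}(I)\cap J) \ \middle|\ e\in I,\ e\in J \,\right] \ \ge\ c_{out}+c_{in}-1.
\]
I would write this conditional probability as $\Pr[A\cap B]$ where $A$ is the event ``$e\in \pi_{out}(I)$'' and $B$ is the event ``$e\in\pi_{in}(\pi_{out}(I)\cap J)$,'' and use the union bound in the form $\Pr[A\cap B]\ge \Pr[A]+\Pr[B]-1$. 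The first term is handled directly by the outer CR-scheme: conditioned on $e\in I$, property (i) of Definition~\ref{defn:cr} gives $\Pr[e\in\pi_{out}(I)\mid e\in I]\ge c_{out}$, and this is unaffected by conditioning additionally on $e\in J$.

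The main work — and the step I expect to be the real obstacle — is bounding $\Pr[B]=\Pr[e\in\pi_{in}(\pi_{out}(I)\cap J)]\ge c_{in}$ conditioned on $e\in I$, $e\in J$. The subtlety is that the inner CR-scheme's guarantee is stated for a set obtained by including each element $f$ \emph{independently} with probability $b\cdot x_f$, whereas the actual set fed to $\pi_{in}$ is $P\cap J = \pi_{out}(I)\cap J$, whose elements are correlated through the (possibly randomized) map $\pi_{out}$. Here I would use two facts. First, for any fixed $f\ne e$, $\Pr[f\in \pi_{out}(I)\cap J]\le \Pr[f\in I]\cdot p_f = b\cdot x_f$, and more importantly the events ``$f\in P\cap J$'' for $f\ne e$ should only be \emph{negatively correlated} with $e$'s presence in the set, or at least dominated in a way that helps the monotone ordered scheme: since $\pi_{in}$ is an \emph{ordered} CR-scheme (it outputs the greedy maximal independent subset in a fixed order $\sigma$), whether $e$ is kept depends only on which elements preceding $e$ in $\sigma$ lie in $P\cap J$, and adding more such elements can only hurt $e$ (monotonicity, property (ii)). So I would show that the set $P\cap J$, restricted to elements before $e$, is stochastically dominated by a set in which each $f$ is included independently with probability $b\cdot x_f$ — using that $e\in I, e\in J$ only makes $\pi_{out}$ \emph{less} likely to include any other element (or can be made so), together with the independence of $J$ across elements. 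Applying monotonicity and then property (i) of the inner scheme to that dominating independent set yields $\Pr[B\mid e\in I, e\in J]\ge c_{in}$.

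Combining, $\Pr[A\cap B\mid e\in I, e\in J]\ge c_{out}+c_{in}-1$, and multiplying by $\Pr[e\in I, e\in J]=b\cdot x_e$ gives the claimed bound $b(c_{out}+c_{in}-1)\cdot x_e$. The delicate point to get right in the full write-up is the stochastic-domination / monotonicity argument justifying that we may apply the inner CR-scheme's guarantee to the correlated set $P\cap J$; I would either invoke a domination lemma from~\cite{CVZ11} or argue it directly from the structure of $\pi_{out}$ (in particular that conditioning on $e\in I$ can only remove other elements from $I$, hence from $P$). Everything else is a routine conditioning argument plus the two-event union bound.
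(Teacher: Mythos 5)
Your setup (condition on $e\in I\cap J$, which has probability $b\,y_e\,p_e=b\,x_e$, then prove a conditional bound of $c_{out}+c_{in}-1$) matches the paper's, and you correctly identify the two CR-scheme properties that must be used. But the decomposition into $A=\{e\in\pi_{out}(I)\}$ and $B=\{e\in\pi_{in}(\pi_{out}(I)\cap J)\}$ with the union bound $\Pr[A\cap B]\ge\Pr[A]+\Pr[B]-1$ does not work: since $\pi_{in}$ outputs a subset of its input, $B\subseteq A$, so $\Pr[A\cap B]=\Pr[B]$ and the union bound reduces to the tautology $\Pr[A]\le 1$. All the content is then carried by your claim that $\Pr[B\mid e\in I\cap J]\ge c_{in}$ --- but that claim is the lemma itself (indeed a strictly stronger bound), and it is false in general. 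Take $\pin=2^V$ with $\pi_{in}$ the identity, so $c_{in}=1$; then $\Pr[B\mid e\in I\cap J]=\Pr[e\in\pi_{out}(I)\mid e\in I\cap J]$, which is only guaranteed to be $\ge c_{out}$ and can be strictly less than $1=c_{in}$. No stochastic-domination argument can rescue this, because the obstruction is not the correlation structure of $P\cap J$ among elements other than $e$, but the fact that $e$ itself may fail to be in $P=\pi_{out}(I)$.

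The fix is to apply the inner scheme's marginal guarantee to the \emph{independent} set $I\cap J$ rather than to $P\cap J$, and use monotonicity only to bound the failure event. Concretely, write
\[
\Pr[e\in\pi_{in}(P\cap J)] \;=\; \Pr[e\in I\cap J\cap P]\;-\;\Pr\bigl[e\in I\cap J\cap P\ \wedge\ e\notin\pi_{in}(P\cap J)\bigr].
\]
The first term is at least $b\,x_e\cdot c_{out}$ by property (i) of the outer scheme. For the second term, condition on any realization $(I_1,P_1,J_1)$ with $e\in I_1\cap J_1\cap P_1$; since $P_1\cap J_1\subseteq I_1\cap J_1$, property (ii) of the inner scheme gives $\Pr[e\notin\pi_{in}(P_1\cap J_1)]\le\Pr[e\notin\pi_{in}(I_1\cap J_1)]$, and averaging yields a bound of $\Pr[e\in I\cap J\ \wedge\ e\notin\pi_{in}(I\cap J)]\le b\,x_e(1-c_{in})$, where now property (i) of the inner scheme applies legitimately because $I\cap J$ is a genuine product sample with marginals $b\,y_e\,p_e=b\,x_e$ and $x\in\p(\pin)$. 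This is exactly the paper's argument; note that it needs only the marginal guarantee (i) on the product set $I\cap J$ plus pointwise monotonicity, and no domination lemma for the correlated set $P\cap J$.
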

\begin{proof}
Recall that $P=\pi_{out}(I)$, so we want to lower bound: 
\begin{eqnarray}
\Pr[e\in \pi_{in}(P\cap J)] &=& \Pr[e\in \pi_{in}(P\cap J)\wedge e\in I\cap J\cap P] \notag\\
&=& \Pr[e\in I\cap J\cap P] - \Pr[e\not\in \pi_{in}(P\cap J)\wedge e\in I\cap J\cap P] \notag \\
&\ge & bx_e\cdot c_{out} - \Pr[e\not\in \pi_{in}(P\cap J)\wedge e\in I\cap J\cap P], \label{eq:main-lem1} 
\end{eqnarray}
where the inequality uses $\Pr[e\in I\cap J]=by_e\cdot p_e=bx_e$ and $\Pr[e\in P=\pi_{out}(I)|e\in I\cap J] \ge c_{out}$ by Definition~\ref{defn:cr}(i) applied to the outer CR
scheme, since $I$ is a random subset chosen according to $b\cdot y$
where $y\in \p(\pot)$. 

We now upper bound $\Pr[e\not\in \pi_{in}(P\cap J)\wedge e\in I\cap J\cap P]$ by $(1-c_{in})\cdot bx_e$ which combined with~\eqref{eq:main-lem1} would prove the lemma. Now, 
condition on any instantiation $I=I_1$, $P=\pi_{out}(I_1)=P_1\sse I_1$ and $J=J_1$ such that $e\in I_1\cap J_1\cap P_1$. Then,
\begin{equation}\label{eq:main-lem2}
\Pr[e\not\in \pi_{in}(P_1\cap J_1)] \quad \le \quad \Pr[e\not\in \pi_{in}(I_1\cap J_1)],
\end{equation} 
by Definition~\ref{defn:cr}(ii) applied to the inner CR scheme (since $e\in P_1\cap J_1\sse I_1\cap J_1$). Taking a linear combination of the inequalities in~\eqref{eq:main-lem2} with respective multipliers $\Pr[I=I_1,J=J_1,P=P_1]$ (where $e\in I_1\cap J_1\cap P_1$), we obtain
\begin{eqnarray*}
\Pr[e\not\in \pi_{in}(P\cap J)\wedge e\in I\cap J\cap P] & \le & \Pr[e\not\in \pi_{in}(I\cap J)\wedge e\in I\cap J\cap P]\\
&\le & \Pr[e\not\in \pi_{in}(I\cap J)\wedge e\in I\cap J]\\
&=& b x_e \cdot \Pr[e\not\in \pi_{in}(I\cap J) | e\in I\cap J]
\end{eqnarray*}
where the equality uses $\Pr[e\in I\cap J]=by_e\cdot p_e=bx_e$. The last expression above is at most $bx_e(1-c_{in})$ by Definition~\ref{defn:cr}(i) applied to the inner CR
scheme, since $I\cap J$ is a random subset chosen according to $b\cdot x$
where $x\in \p(\pin)$. This proves $\Pr[e\not\in \pi_{in}(P\cap J)\wedge e\in I\cap J\cap P]\le (1-c_{in})\cdot bx_e$ as desired.
\end{proof}

%%%%%%%%%%%%%%%%%%%%%%%%%%%%%%%%%%%%%%%%%%%%%%%%%%%%%%%%%%%%%

Consequently, the expected weight of the chosen set $S$ is
$$\E\left[ \sum_{e\in S} w_e\right]  \,\, =  \,\, \sum_{e\in V} w_e\cdot
\Pr\left[ e\in \pi_{in}\left( P\cap J\right) \right] \,\, \ge  \,\,
b(c_{in}+c_{out}-1)\cdot \sum_{e\in V} w_e\cdot x_e. $$
The inequality uses Lemma~\ref{lem:main}. This completes the proof of Theorem~\ref{thm:stoc-probe}.

\stocoption{}{
\medskip
\noindent
{\bf Remark:} We note that our results also hold in a slightly more general model where the elements are not necessarily independent, but every set $T\in \pot$ is mutually independent.\footnote{A set $\{E_i\}_{i=1}^\ell$ of events is mutually independent if for any subset $L\sse[\ell]$ we have $\Pr\left[\wedge_{i\in L} E_i\right]=\Pi_{i\in L} \Pr[E_i]$.} 
\begin{OneLiners}
\item Observe that $\lp$ is a valid relaxation for stochastic probing, even in this setting. The only change in the proof of Claim~\ref{cl:lp-opt} is: if $Q^*$ and $S^*$ denote the sets of probed and chosen elements in an optimal policy then $\Pr[e\in S^*] =\Pr[e\in Q^* \text{ and $e$ active}]=\Pr[e\in Q^*]\cdot \Pr[\text{$e$ active}\mid e\in Q^*]=\Pr[e\in Q^*]\cdot p_e$, where the last equality uses the fact that at any point in the optimal policy when $e$ is probed, ``element $e$ being active'' is independent of the previously observed elements (which along with $e$ is some set in $\pot$ and hence is mutually independent). 
\item Moreover, in Lemma~\ref{lem:main}, if we let $J\sse V$ denote the  random subset where each element $e$ is present independently with probability $p_e$ and $J_a\sse V$ the set of active elements, then $\Pr\left[e\in \pi_{in}\left( \pi_{out}(I) \cap J_a\right) \right]=\Pr\left[e\in \pi_{in}\left( \pi_{out}(I) \cap J\right) \right]$. This is because, conditioning on any $I=I_1$ and $P=\pi_{out}(I_1)=P_1$, the distributions of $J_a\cap P_1$ and $J\cap P_1$ are identical (by mutual independence of $P_1$).
\end{OneLiners}
\section{Bayesian Single Parameter Mechanism Design}\label{sec:spm}
\def\bb{\mathbf{b}}

In this section, we show how a Bayesian single-parameter auction problem
can be modeled as a stochastic probing problem, yielding new
posted-price mechanisms for such auctions. 

Formally, we consider a {\em Bayesian} mechanism design problem with one
seller and $n$ {\em single-parameter} agents that bid for service.  The
term ``single-parameter'' means that each agent $i$'s private
information is represented by a single number $v_i$, which is the
agent's {\em valuation}.  In the Bayesian setting, the valuation $v_i
\in \{0,1,\ldots,B\}$ is drawn from an independent probability
distribution $\D_i$.\footnote{We can also handle continuous
  distributions by approximating them via discrete distributions, at the
  loss of a small constant factor.} These valuations are private
knowledge, but the distributions $\D_i$ are publicly known. Each agent
$i\in[n]$ submits a {\em bid} $b_i\in\{0,1,\ldots,B\}$ representing his
valuation. The seller has a \emph{feasibility constraint} given by a
downward closed set system $\I\sse 2^{[n]}$, and hence can serve any set
of agents from \I.

A {\em mechanism} is a function that maps a bid-vector $\bb \in
\{0,1,\ldots,B\}^n$ to an {\em allocation} $A(\bb)\in\I$, along with
{\em prices} $\pi_i(\bb)$ to be paid by each agent $i\in[n]$. For
notational convenience, we define $X_i(\bb) := \mathbf{1}_{i\in A(\bb)}$
denoting whether or not agent $i$ receives service. The {\em utility} of
agent $i$ under bids $\bb$ is $v_i\cdot X_i(\bb)-\pi_i(\bb)$. We
consider mechanisms satisfying the following standard properties:
\begin{itemize}
\item {\em Voluntary participation:} an agent pays only when receiving service and the payment is at most his bid. For all bid vectors $\bb$, $\pi_i(\bb) \le X_i(\bb)\cdot b_i$ for all agents $i\in[n]$.
\item {\em No positive transfers:} the mechanism does not pay agents,  $\pi_i(\bb)\ge 0$ for all $i\in[n]$ and $\bb$.
\item {\em Truthful in expectation:} For each $i\in[n]$ and $v_i\in\{0,1,\ldots,B\}$, if $v_i$ is agent $i$'s true valuation then his expected utility by 
bidding $v_i$ is at least his expected utility under any other bid $b_i\in\{0,1,\ldots,B\}$, i.e.,
\end{itemize}
{\small $$\E_{b_j\gets \D_j : j\ne i}\,\left[ v_i\cdot X_i(\bb_{-i}, v_i) - \pi_i(\bb_{-i}, v_i) \right] \,\, \ge \,\, \E_{b_j\gets \D_j : j\ne i}\,\left[ v_i\cdot X_i(\bb_{-i}, b_i) - \pi_i(\bb_{-i}, b_i) \right]
$$}

We are interested in designing a mechanism that maximizes the expected
revenue. The well-known \emph{Myerson mechanism}~\cite{M81} is optimal
for the single-parameter setting, and it proceeds by reducing the
revenue maximization problem to the welfare-maximization setting (which
can then be solved using the VCG mechanism). However the resulting
mechanism can be complicated to implement, and is computationally hard
under combinatorial feasibility constraints such as intersections of
more than two matroids. Hence, simpler mechanisms such as ``sequential
posted price mechanisms'' (SPM) are often desirable in practice;
see~\cite{CHMS10,Y11,KW12} for further discussion on this.  In an SPM,
the seller offers ``take it or leave it'' prices to the agents
one-by-one. When the feasible set \I is given by the intersection of $k$
matroid constraints, Chawla et al.~\cite{CHMS10} showed the existence of
SPMs achieving a $\frac1{k+1}$-approximation to the optimal mechanism.
However, when the posted prices are to be computed in polynomial time,
the approximation ratio becomes $\Omega(1/k^2)$. Indeed, getting
$\Omega(1/k)$-approximate SPMs for intersections of $k \geq 2$ matroids was
an open problem before this work.

In this section, we show that approximately optimal SPMs can be obtained
as an application of the stochastic probing problem. Since our algorithm
for computing prices runs in polynomial time, we obtain a polynomial
time SPM for $k$-matroid constraints that is an
$\Omega(1/k)$-approximation to the optimal mechanism. We proceed by
first showing that computing the optimal posted price mechanism is an
instance of the matroid constrained probing problem. Then we show that
the LP relaxation of this probing problem (which we use for our
algorithm) has value at least that of the optimal (potentially
non-posted price) mechanism as well, which completes the argument.

\subsection{Posted Price Mechanism as Probing Problem}

Given the distributions $\D_i$ of each agent $i\in[n]$, and the
feasibility constraint \I, we are interested in computing a {\em
  sequential posted price} mechanism. This corresponds to setting prices
$\pi_i$ for each agent $i\in [n]$, and making ``take it or leave it''
offers to agents in a suitable sequence (while ensuring feasibility in
\I). A rational agent will accept an offer if and only if the posted
price is at most his valuation. It is clear that any such mechanism is
truthful since the prices are independent of bids. In fact this
mechanism is truthful even when each agent knows the precise bids of all
other agents, which is a stronger condition than truthfulness in
expectation.

Consider an instance of the stochastic probing problem with:
\begin{itemize}
\item Universe $V:=\{(i,c)\, :\, i\in[n],\, c\in\{0,1,\ldots,B\} \}$.
\item Weights $w_{i,c} = c$ for all $(i,c)\in V$.
\item Probabilities $p_{i,c} = \Pr_{v_i\gets \D_i}\left[v_i \ge c\right]$ for all $(i,c)\in V$.
\item The outer constraint being a partition matroid: $\pot$ consists of
  all subsets $S\sse V$ with $|S\cap \{(i,c)\}_{c=0}^B|\le 1$ for all
  $i\in[n]$. This corresponds to offering at most one price to each
  agent.
\item The inner constraint being the natural lifting of the seller's
  feasibility constraint (on universe $[n]$) to $V$, where
  $\{(i,c)\}_{c=0}^B$ are copies of $i$. Formally, $\pin$ consists of
  all subsets $S\sse V$ with (a) $|S\cap \{(i,c)\}_{c=0}^B|\le 1$ for
  all $i\in[n]$ and (b) $\{i\in[n] : \exists (i,c)\in S\}\in \I$.
\end{itemize} 
Notice that if \I is given by an intersection of $p$ matroids then so is
$\pin$.

Due to the outer constraint, a solution to this probing problem never
probes two copies of the same agent. This ensures two properties: (1)
the independence assumption on elements of $V$ agrees with the auction
setting where copies $\{(i,c)\}_{c=0}^B$ of each agent $i$ are actually
dependent, and (2) we obtain a posted price mechanism. Moreover, the
inner constraint handles the feasibility constraint \I. Thus, solutions
to this probing problem correspond precisely to sequential posted price
mechanisms and vice versa. Instead of modeling the elements $\{(i,c)\}_{c=0}^B$ of each agent $i$ as being active independently, we could also model their joint distribution induced by $\D_i$: since every set in $\pot$ is now mutually independent, Theorem~\ref{thm:stoc-probe} still applies (as noted in the end of Section~\ref{sec:weighted}).

We can now use our algorithm for the weighted stochastic probing problem
to obtain an approximately optimal SPM. In the next subsection, we show
that the optimal revenue (of any mechanism, which may potentially be
non-posted-price) is at most the value of the stochastic probing LP.
Since our approximation ratio for stochastic probing is relative to this
LP, we obtain the following result (setting $b=\frac{1}{2k+1}$, $c_{in}=1-kb$ and $c_{out}=(1-e^{-b})/b\ge 1-b/2$ in Theorem~\ref{thm:stoc-probe})
\begin{theorem}\label{thm:spm}
  There is a polynomial-time sequential posted price mechanism for $k$
  matroid intersection constraints, which has revenue at least
  $\frac1{4k+2}$ times the revenue of the optimal mechanism.
\end{theorem}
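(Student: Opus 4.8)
The plan is to assemble the theorem from two ingredients. First, run the LP-rounding algorithm of Theorem~\ref{thm:stoc-probe} on the probing instance constructed above (universe $V=\{(i,c)\}$, weights $w_{i,c}=c$, probabilities $p_{i,c}=\Pr_{v_i\sim\D_i}[v_i\ge c]$, outer constraint a partition matroid, inner constraint the lifting of \I), instantiated with the contention-resolution schemes of Theorem~\ref{thm:cr-schemes}; this produces a non-adaptive policy whose expected weight is $\tfrac1{4k+2}$ times the optimal value of \lp. Second, show that the optimal value of \lp for this instance is at least the revenue of the optimal mechanism. A non-adaptive policy on $V$ is precisely a sequential posted-price mechanism --- the probing order is the order of price offers, membership of the probed set in \pot forces at most one offer per agent, membership of the chosen set in \pin enforces feasibility in \I, and the chosen set is exactly the set of accepted offers --- so combining the two ingredients yields a polynomial-time SPM with expected revenue at least $\tfrac1{4k+2}$ times the optimum.

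For the first ingredient, note that here \pot is a partition matroid and \pin is the intersection of $k$ matroids, namely the liftings to $V$ of the $k$ matroids defining \I (the ``at most one copy $(i,c)$ per agent $i$'' requirement is automatic, since the copies $\{(i,c)\}_c$ are parallel elements in each lifted matroid); in particular \pin is a $k$-system. Theorem~\ref{thm:cr-schemes} then supplies a monotone $\bigl(b,(1-e^{-b})/b\bigr)$ CR-scheme for $\P(\pot)$ and a monotone $(b,\,1-kb)$ ordered CR-scheme for $\P(\pin)$, both polynomial-time, and \lp itself is solvable in polynomial time since matroid polytopes and their intersections admit polynomial separation. Plugging $c_{out}=(1-e^{-b})/b$ and $c_{in}=1-kb$ into Theorem~\ref{thm:stoc-probe} gives approximation ratio $b(c_{out}+c_{in}-1)=(1-e^{-b})-kb^{2}\ge b-(k+\tfrac12)b^{2}$, using $1-e^{-b}\ge b-b^{2}/2$; taking $b=\tfrac1{2k+1}$ makes $(k+\tfrac12)b=\tfrac12$, so the ratio is $\tfrac{b}{2}=\tfrac1{4k+2}$, relative to the optimal value of \lp.

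The second ingredient is the main content. Fix any mechanism $M$ satisfying voluntary participation, no positive transfers, and truthfulness in expectation, and let $A_i(c)$ be agent $i$'s interim allocation probability when bidding $c$ (expectation over the other bids and the mechanism's coins). Truthfulness forces $A_i$ non-decreasing, so $a_{i,c}:=A_i(c)-A_i(c-1)\ge 0$ (with $A_i(-1):=0$) and $\sum_c a_{i,c}=A_i(B)\le 1$. Set $y_{i,c}:=a_{i,c}$ and $x_{i,c}:=p_{i,c}\,y_{i,c}$. Then $\sum_c y_{i,c}\le 1$ gives $y\in\P(\pot)$; the vector $z$ with $z_i:=\sum_c x_{i,c}=\sum_c p_{i,c}a_{i,c}$ equals $\sum_c f_i(c)A_i(c)$, the ex-ante probability that $i$ is served, so since the realized allocation is always a set in \I the vector $z$ is a convex combination of indicators $\mathbf 1_S$, $S\in\I$, and thus lies in $\P(M_j)$ for each defining matroid $M_j$; bounding $\sum_{(i,c)\in\widehat S}x_{i,c}\le\sum_{i\,:\,(i,c)\in\widehat S}z_i\le r_{M_j}\bigl(\{i:(i,c)\in\widehat S\}\bigr)$ then gives $x\in\P(\pin)$, and $x=p\cdot y$ by construction, so $(x,y)$ is feasible for \lp. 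Finally, incentive compatibility together with the two participation conditions forces the interim payment to satisfy $\bar\pi_i(0)=0$ and $\bar\pi_i(c)-\bar\pi_i(c-1)\le c\,a_{i,c}$, hence $\bar\pi_i(c)\le\sum_{w\le c}w\,a_{i,w}$; therefore $\mathrm{rev}(M)=\sum_i\sum_c f_i(c)\,\bar\pi_i(c)\le\sum_i\sum_c c\,p_{i,c}\,a_{i,c}=\sum_{(i,c)}w_{i,c}\,x_{i,c}$, the \lp-objective of $(x,y)$. Taking $M$ to be Myerson's optimal mechanism (where all these inequalities become equalities) shows the optimal value of \lp is at least the optimal revenue, and combining with the first ingredient proves the theorem.

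The routine part is the first ingredient: the CR-scheme plug-in and the arithmetic yielding $\tfrac1{4k+2}$ follow immediately from Theorems~\ref{thm:stoc-probe} and~\ref{thm:cr-schemes}. The real obstacle is the second ingredient --- reading off from Myerson's characterization a point $(x,y)$ feasible for \lp whose objective equals the mechanism's revenue. Within that, two points need care: verifying that the lifted ex-ante allocation $x$ lies in $\P(\pin)$ (the rank of a set $\widehat S$ in a lifted matroid is the \I-rank of the agents appearing in $\widehat S$, which is exactly what makes the chain of inequalities go through), and carrying out the discrete analogue of Myerson's payment identity for $\{0,1,\ldots,B\}$-valued distributions, which is where the summand $c\,p_{i,c}\,a_{i,c}$ comes from.
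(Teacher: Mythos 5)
Your proposal is correct and follows essentially the same route as the paper: reduce to the probing instance, plug the matroid and $k$-system CR-schemes into Theorem~\ref{thm:stoc-probe} with $b=\frac{1}{2k+1}$ to get the $\frac{1}{4k+2}$ ratio against \lp, and then use the Myerson/Archer--Tardos monotonicity and payment inequalities on the interim allocations $z_{i,c}$ to build a feasible LP point whose objective dominates the optimal revenue (your $a_{i,c}=A_i(c)-A_i(c-1)$ is exactly the paper's $y_{i,c}=z_{i,c}-z_{i,c-1}$, and your summation-by-parts is the same computation). The only cosmetic difference is that the paper routes through an explicit intermediate relaxation $LP_M$ of the optimal mechanism and proves $LP_P\ge LP_M$, whereas you inline that comparison directly into the revenue bound.
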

More generally, this result holds for any feasibility constraint $\I$ that admits an {\em ordered} CR scheme, where the approximation ratio depends on the quality of the CR scheme. For example, this also implies a constant factor approximate SPM when $\I$ is given by an unsplittable flow on trees.

\subsection{Bounding the Optimal Mechanism}

Recall that our algorithm for the weighted probing problem is based on the LP relaxation \lp. For instances  corresponding to the Bayesian mechanism design problem (from the reduction above), this LP is:
{\small \begin{align}
LP_P \quad =\quad \max & \qquad \sum_{i \in [n]} \sum_{c=0}^B  c \cdot x_{i,c} \label{lpP:obj}\\
  \text{subject to}& \qquad x_{i,c}  = \Pr[v_i\ge c]\cdot y_{i,c} \qquad \forall (i,c)\in V \\
&\qquad \left\{\sum_{c=0}^B x_{i,c} \, : \,i\in [n]\right\} \,\in \, \p(\I)  \label{lpP:inner}\\
&\qquad \sum_{c=0}^B y_{i,c}  \le 1  \qquad\qquad \forall i\in [n]\label{lpP:outer}\\
&\qquad \mathbf{y} \ge \mathbf{0}. \label{lpP:non-neg}
\end{align}}
Constraint~\eqref{lpP:inner} is the inner constraint which is a lifting
of $\I$, and~\eqref{lpP:outer} is the outer partition matroid
constraint. We will show that the optimal value of this LP is least the
value of the optimal mechanism for the Bayesian auction problem. To do
so, we want to write an LP relaxation for the optimal mechanism.
Consider any mechanism given by allocations
$\left\{X_i(\bb)\right\}_{i\in[n]}$ and prices
$\left\{\pi_i(\bb)\right\}_{i\in[n]}$ as functions of bids. For each $i\in[n]$ and $c\in\{0,1,\ldots,B\}$, define:
$$z_{i,c} := \E_{b_j\gets \D_j : j\ne i}\,\left[ X_i(\bb_{-i}, c)\right]
\quad \mbox{and} \quad q_{i,c} := \E_{b_j\gets \D_j : j\ne i}\,\left[
  \pi_i(\bb_{-i}, c)\right].$$ 
For each agent $i$ and value $c$, when agent $i$
bids $c$, $z_{i,c}$ is the probability that $i$ is served by the
mechanism and $q_{i,c}$ is the expected price that $i$ is charged (both
expectations are taken over valuations of all other agents $[n]\setminus
i$).

\begin{lemma}[Myerson~\cite{M81}, Archer and Tardos~\cite{AT01}]
  \label{lem:mech-prop}
  Any mechanism that satisfies truthfulness in expectation and voluntary
  participation has:
  \begin{OneLiners}
  \item[A.] $z_{i,c}$ is non-decreasing in $c$, for all $i\in[n]$.
  \item[B.] $q_{i,c}\le c\cdot z_{i,c} - \sum_{h=0}^{c-1} z_{i,h}$ for all
    $c\in\{0,1,\ldots,B\}$ and $i\in[n]$. 
  \end{OneLiners}
\end{lemma}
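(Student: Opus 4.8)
The plan is to reduce the lemma to a single-agent incentive inequality and then extract both claims from it by elementary manipulations. Fix an agent $i$ and, for the rest of the argument, suppress $i$ from the notation, writing $z_c$ for $z_{i,c}$ and $q_c$ for $q_{i,c}$. Since the valuations are drawn from a product distribution, taking $\E_{b_j\gets\D_j:\,j\ne i}$ of the truthfulness-in-expectation inequality, with true valuation $v_i=c$ and deviating bid $b_i=c'$, turns the pointwise utility $v_i X_i(\bb)-\pi_i(\bb)$ into $c\,z_{c'}-q_{c'}$; hence truthfulness is equivalent to the system of inequalities
\[ c\,z_c - q_c \;\ge\; c\,z_{c'} - q_{c'}\qquad\text{for all } c,c'\in\{0,1,\dots,B\}. \]
Everything below is derived purely from this system together with voluntary participation.

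For part A, I would instantiate the incentive inequality twice between consecutive bids: ``type $c$ does not want to report $c-1$'', giving $c\,z_c-q_c\ge c\,z_{c-1}-q_{c-1}$, and ``type $c-1$ does not want to report $c$'', giving $(c-1)z_{c-1}-q_{c-1}\ge (c-1)z_c-q_c$. Adding the two inequalities, the payment terms cancel and one is left with $(z_c-z_{c-1})\ge 0$, i.e.\ $z_c$ is non-decreasing in $c$.

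For part B, I would keep only the first of those two inequalities, rearranged as $q_c - q_{c-1}\le c\,(z_c-z_{c-1})$, and sum it telescopically from $h=1$ to $c$ to get $q_c - q_0 \le \sum_{h=1}^{c} h\,(z_h-z_{h-1})$. Voluntary participation gives, pointwise, $\pi_i(\bb_{-i},0)\le X_i(\bb_{-i},0)\cdot 0 = 0$, hence $q_0\le 0$. A discrete summation by parts on the right-hand side,
\[ \sum_{h=1}^{c} h\,(z_h-z_{h-1}) \;=\; \sum_{h=1}^{c} h\,z_h - \sum_{h=0}^{c-1}(h+1)\,z_h \;=\; c\,z_c - \sum_{h=0}^{c-1} z_h, \]
then yields $q_c \le c\,z_c - \sum_{h=0}^{c-1} z_h$, which is exactly claim B.

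I do not expect a genuine obstacle here; the two points that merit care are (i) verifying that the truthfulness-in-expectation hypothesis, after integrating out $b_{-i}$ against the product prior, really produces the stated single-parameter incentive inequality (it does, by linearity of expectation), and (ii) getting the index bookkeeping in the Abel summation right so that the final sum runs over $h=0,\dots,c-1$. I would therefore write the summation-by-parts step out explicitly, as it is the one slightly fiddly computation in the proof.
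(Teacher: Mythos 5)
Your proof is correct and follows essentially the same route as the paper's: part A by adding the two incentive constraints between a pair of bids so the payments cancel, and part B by telescoping the one-step inequality $q_h-q_{h-1}\le h(z_h-z_{h-1})$ and using voluntary participation at bid $0$. The only cosmetic differences are that the paper states part A for arbitrary $c_1<c_2$ rather than consecutive values and uses $q_{i,0}=0$ where you only need $q_{i,0}\le 0$; neither affects correctness.
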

\begin{proof}
  We provide a proof for completeness. Fix any agent $i\in[n]$.
  For the first property, we will show that $z_{i,c_1}\le z_{i,c_2}$ for
  any values $c_1<c_2$. By the truthfulness condition when $i$'s true
  valuation is $c_1$ and he bids $c_2$,
  $$c_1\cdot z_{i,c_1} - q_{i,c_1} \quad \ge \quad c_1\cdot z_{i,c_2} - q_{i,c_2}.$$ 
  Similarly when $i$'s true valuation is $c_2$ and he bids $c_1$,
  $$c_2\cdot z_{i,c_2} - q_{i,c_2} \quad \ge \quad c_2\cdot z_{i,c_1} - q_{i,c_1}.$$ 
  Adding the above two inequalities and rearranging, we get
  $(c_2-c_1)(z_{i,c_2}-z_{i,c_1})\ge 0$, i.e., $z_{i,c_2}\ge z_{i,c_1}$
  as desired. This proves the monotonicity of $z_{i,*}$.

  For the second property, fix also any value $c$. For each $h\le c$,
  when $i$'s true valuation is $h$ and he bids $h-1$, by truthfulness:
  $$q_{i,h} - q_{i,h-1} \quad \le \quad h\cdot z_{i,h} - h\cdot z_{i,h-1}.$$
  Adding this inequality over all $h\in \{1,\ldots,c\}$,
  $$q_{i,c}- q_{i,0} \quad \le \quad c\cdot z_{i,c} - \sum_{h=0}^{c-1} z_{i,h}.$$
  Now, voluntary participation implies that $q_{i,0}=0$, which proves
  the desired inequality.
\end{proof}

Also define $x_i := \sum_{c=0}^B \Pr[v_i=c]\cdot z_{i,c}$ for each $i\in [n]$. This denotes the probability that agent $i$ is served by the mechanism when all agents bid their true valuation. 

\begin{cl}\label{claim:feas-alloc}
  $\left\{x_i\,:\, i\in[n]\right\} \in \p(\I)$.
\end{cl}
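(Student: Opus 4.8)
The plan is to realize the vector $(x_i)_{i\in[n]}$ as an average of indicator vectors of sets in $\I$, and then invoke the fact that $\p(\I)$ is by definition a relaxation containing $\mathrm{conv}(\I)$.

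First, consider the (random) outcome of running the given mechanism when every agent $j$ bids truthfully, i.e., on a bid vector $\mathbf{v}=(v_1,\dots,v_n)$ with each $v_j$ drawn independently from $\D_j$. Since the mechanism's allocation $A(\bb)$ lies in $\I$ for \emph{every} bid vector $\bb$, the set $A(\mathbf{v})$ is a (random) independent set, so its indicator vector $\chi^{A(\mathbf{v})}\in\{0,1\}^{[n]}$ is supported on $\{\chi^T : T\in\I\}$. In particular $\E_{\mathbf{v}}[\chi^{A(\mathbf{v})}]=\sum_{T\in\I}\Pr[A(\mathbf{v})=T]\cdot\chi^T$ is a convex combination of vertices $\chi^T$ with $T\in\I$, hence lies in $\mathrm{conv}(\I)\subseteq\p(\I)$.

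It then remains to check that this expectation is exactly $(x_i)_{i\in[n]}$, which is a short computation relying on the independence of the distributions $\D_j$: for each $i$, condition on $v_i$ and take the inner expectation over the remaining coordinates first, $\E_{\mathbf{v}}[X_i(\mathbf{v})]=\E_{v_i\sim\D_i}\big[\E_{b_j\sim\D_j,\,j\ne i}[X_i(\mathbf{v}_{-i},v_i)]\big]=\E_{v_i\sim\D_i}[z_{i,v_i}]=\sum_{c=0}^B\Pr[v_i=c]\,z_{i,c}=x_i$, using the definition of $z_{i,c}$ and of $x_i$. Combining the two displays gives $(x_i)_{i\in[n]}=\E_{\mathbf{v}}[\chi^{A(\mathbf{v})}]\in\p(\I)$.

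There is no real obstacle in this argument; the only subtlety worth flagging is the interchange/ordering of expectations in the last step, which is valid precisely because the agents' valuations are drawn independently (the standing assumption of the Bayesian model), so that averaging over $\bb_{-i}$ for a fixed $v_i$ and then averaging over $v_i\sim\D_i$ reproduces the true-valuation distribution on $\mathbf{v}$.
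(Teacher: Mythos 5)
Your proof is correct and is essentially the paper's own argument: both express $(x_i)_{i\in[n]}$ as the expectation of the (random) allocation's indicator vector under truthful bids, observe that this is a convex combination of points of $\I$ and hence lies in $\p(\I)$, and verify via the same conditioning on $v_i$ (using independence of the $\D_j$) that the expectation equals $x_i$. No meaningful difference in route or content.
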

\begin{proof}
  The feasibility constraint imposed by \I implies that $\left\{X_i(\bb)
    \,:\, i\in[n]\right\} \in \I$ for each $\bb\in \{0,1,\ldots,B\}^n$.
  Since $\p(\I)$ is a relaxation of \I, it is clear that
  $\left\{X_i(\bb) \,:\, i\in[n]\right\} \in \p(\I)$ for all $\bb$.
  Since $\p(\I)$ is a convex set, we have $\sum_{\bb} \lambda(\bb)\cdot
  \{X_i(\bb)\} \in \p(\I)$ for any convex multipliers $\lambda$s.
  Setting $\lambda(\bb):=\Pr\left[v_j=b_j\, \forall j\in [n]\right] =
  \Pi_{i=j}^n \Pr[v_j=b_j]$, for each $i\in[n]$ we have $\sum_{\bb} \lambda(\bb)\cdot X_i(\bb)$ equal to
  $$=\,\, \sum_{c=0}^B
  \Pr[v_i=c] \cdot \sum_{\bb_{-i}} \Pr[v_j=b_j : j\ne i] \cdot
  X_i(\bb_{-i},c) \,\, = \,\, \sum_{c=0}^B \Pr[v_i=c] \cdot z_{i,c}
  \,\,=\,\, x_i.$$ 
  Thus we obtain $\{x_i\}\in \p(\I)$ as claimed.
\end{proof}

Combining Lemma~\ref{lem:mech-prop} and Claim~\ref{claim:feas-alloc} we obtain the following LP relaxation for valid mechanisms:
\begin{align}
LP_M \quad =\quad  \max & \qquad \sum_{i \in [n]} \sum_{c=0}^B  \Pr[v_i=c] \cdot \left[ c\cdot z_{i,c} - \sum_{h=0}^{c-1} z_{i,h}\right] \label{lpM:obj}\\
  \text{subject to}& \qquad 0\le z_{i,0}\le z_{i,1}\le\cdots \le z_{i,B}\le 1 \qquad \forall i \in [n]  \label{lpM:monotone} \\
&\qquad x_i = \sum_{c=0}^B \Pr[v_i=c]\cdot z_{i,c} \qquad \qquad \forall i\in [n]\\
&\qquad \left\{x_i\,:\, i\in[n]\right\} \in \p(\I).\label{lpM:alloc}
\end{align}

We note that this LP is, in fact, a {\em relaxation} of the optimal
mechanism, and the optimal value of this LP may be strictly larger than
that of the optimal mechanism. (E.g.  for a single matroid constraint,
this gap can be as large as $\frac{e}{e-1}$.) It is also known that this
gap can be closed by adding exponentially many valid inequalities---the
so-called \emph{Border inequalities}. However, this is not important for
the current development, and the interested reader may refer
to~\cite{V11} for a thorough treatment of this area.

We are now ready to relate the above two LPs: $LP_M$, which is a
relaxation of the optimal mechanism, and $LP_P$, our relaxation of the
stochastic probing instance.
\begin{lemma}
$LP_P \,\,\ge \,\,LP_M$. Hence the optimal LP value of the stochastic probing instance is at least the revenue of the optimal mechanism.
\end{lemma}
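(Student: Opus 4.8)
The plan is to exhibit, for any feasible solution $\{z_{i,c}\}$ of $LP_M$, a feasible solution $(\x,\y)$ of $LP_P$ whose objective value is at least that of the $LP_M$ solution. The natural guess is to set $y_{i,c}$ to be the ``marginal probability that the posted price offered to agent $i$ is exactly $c$''. Concretely, since $z_{i,c}$ is non-decreasing in $c$ (constraint~\eqref{lpM:monotone}), the increments $z_{i,c}-z_{i,c-1}\ge 0$ behave like a probability distribution over thresholds: I would set $y_{i,c} := z_{i,c}-z_{i,c-1}$ for $c\ge 1$ and $y_{i,0}:=z_{i,0}$ (with $z_{i,-1}:=0$). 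Then $\sum_{c=0}^B y_{i,c} = z_{i,B}\le 1$, so~\eqref{lpP:outer} and~\eqref{lpP:non-neg} hold. Define $x_{i,c} := \Pr[v_i\ge c]\cdot y_{i,c}$, which makes the first constraint of $LP_P$ hold by construction.

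Next I would verify the inner constraint~\eqref{lpP:inner}, i.e. that $\big\{\sum_{c=0}^B x_{i,c}:i\in[n]\big\}\in\p(\I)$. The key computation is to show $\sum_{c=0}^B x_{i,c} = \sum_{c=0}^B \Pr[v_i\ge c]\,(z_{i,c}-z_{i,c-1})$ equals $x_i = \sum_{c=0}^B \Pr[v_i=c]\cdot z_{i,c}$ (the quantity from Claim~\ref{claim:feas-alloc}), after which~\eqref{lpP:inner} follows from~\eqref{lpM:alloc} since that claim already established $\{x_i\}\in\p(\I)$. This identity is an Abel summation: writing $\Pr[v_i\ge c] = \sum_{h\ge c}\Pr[v_i=h]$ and swapping the order of summation collapses the telescoping increments of $z_{i,\cdot}$ against the tail probabilities, yielding exactly $\sum_h \Pr[v_i=h]\,z_{i,h}$. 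Boundedly, this is the same summation-by-parts manipulation that appears implicitly in Myerson's theory.

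Finally I would compare objectives. The $LP_P$ objective on this solution is $\sum_i\sum_c c\cdot x_{i,c} = \sum_i\sum_c c\cdot \Pr[v_i\ge c]\,(z_{i,c}-z_{i,c-1})$, and I must show this is at least $LP_M = \sum_i\sum_c \Pr[v_i=c]\big(c\, z_{i,c} - \sum_{h=0}^{c-1} z_{i,h}\big)$. Again this is a summation-by-parts identity: I would fix an agent $i$, drop the subscript, and show $\sum_c c\,\Pr[v\ge c]\,(z_c-z_{c-1})$ equals (in fact equals, not merely dominates) $\sum_c \Pr[v=c]\big(c\,z_c-\sum_{h<c} z_h\big)$ by rearranging both sides into the common form $\sum_c z_c\cdot\big(c\Pr[v\ge c]-(c+1)\Pr[v\ge c+1]\big)$ or similar; the monotonicity of $z$ is not even needed for this equality, only non-negativity was needed for feasibility. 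The second sentence of the lemma then follows because $LP_M$ is a valid upper bound on the optimal mechanism's revenue, as established by Lemma~\ref{lem:mech-prop} and Claim~\ref{claim:feas-alloc}.

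The main obstacle is purely bookkeeping: getting the boundary terms ($c=0$, $c=B$, the convention $z_{i,-1}=0$) right in the two Abel-summation identities, and making sure the ``offer price $c$ with probability $y_{i,c}$'' interpretation is consistent with $p_{i,c}=\Pr[v_i\ge c]$ being the acceptance probability. There is no real conceptual difficulty once the correct choice $y_{i,c}=z_{i,c}-z_{i,c-1}$ is made; the risk is an off-by-one error in the telescoping, so I would carefully check the identity on a tiny example (say $B=1$) before trusting the general manipulation.
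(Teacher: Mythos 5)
Your proposal is correct and follows essentially the same route as the paper's proof: the same substitution $y_{i,c}=z_{i,c}-z_{i,c-1}$ (with $z_{i,-1}=0$), the same verification of constraints~\eqref{lpP:outer}--\eqref{lpP:non-neg}, and the same two Abel-summation identities showing $\sum_c x_{i,c}=x_i$ and that the two objectives are in fact equal. The paper carries out exactly the telescoping computations you outline, so the only remaining work is the bookkeeping you already flagged.
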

\begin{proof}
Given any feasible solution $\langle z_{i,c}, x_i\rangle$ to $LP_M$, we construct a feasible solution $\langle x_{i,c}, y_{i,c}\rangle$ to $LP_P$ of the same objective value. Set $y_{i,c} := z_{i,c}-z_{i,c-1}$ for all $i\in [n]$ and $c\in\{0,1,\ldots,B\}$; using $z_{i,-1}=0$. Note that $\textbf{y}\ge 0$ due to constraint~\eqref{lpM:monotone}. Also $\sum_{c=0}^B y_{i,c} = z_{i,B} \le 1$ for each $i\in[n]$. This shows that constraints~\eqref{lpP:outer}-\eqref{lpP:non-neg} in $LP_P$ are satisfied.

Since $x_{i,c} = \Pr[v_i\ge c]\cdot y_{i,c} = \Pr[v_i\ge c]\cdot (z_{i,c}-z_{i,c-1})$, we have for each $i\in[n]$, 
$$\sum_{c=0}^B x_{i,c} = \sum_{c=0}^B \Pr[v_i\ge c]\cdot \left(z_{i,c}-z_{i,c-1}\right) = \sum_{c=0}^B z_{i,c} \cdot \left( \Pr[v_i\ge c] - \Pr[v_i\ge c+1] \right),$$ which equals $\sum_{c=0}^B z_{i,c} \cdot \Pr[v_i = c] = x_i$. Thus constraint~\eqref{lpM:alloc} in $LP_M$ implies constraint~\eqref{lpP:inner} in $LP_P$.  Finally, the objective value~\eqref{lpP:obj} of $LP_P$ is:
{\small \begin{align*}
\sum_{i \in [n]} \sum_{c=0}^B  c \cdot x_{i,c} & = \sum_{i \in [n]} \sum_{c=0}^B  c \cdot \Pr[v_i\ge c]\cdot \left(z_{i,c}-z_{i,c-1}\right)\\
& = \sum_{i \in [n]} \sum_{c=0}^B  z_{i,c}\cdot \left( c\cdot \Pr[v_i\ge c] - (c+1)\cdot \Pr[v_i\ge c+1] \right)\\
& = \sum_{i \in [n]} \sum_{c=0}^B  z_{i,c} \cdot \left( c\cdot \Pr[v_i = c] - \Pr[v_i\ge c+1]\right)\\
&= \sum_{i \in [n]} \sum_{c=0}^B  \Pr[v_i = c]\cdot c\cdot z_{i,c} \,\, - \,\, 
\sum_{i \in [n]} \sum_{c=0}^B  z_{i,c} \cdot \sum_{h=c+1}^B \Pr[v_i=h]\\
&= \sum_{i \in [n]} \sum_{c=0}^B  \Pr[v_i = c]\cdot c\cdot z_{i,c} \,\, - \,\, \sum_{i \in [n]} \sum_{c=0}^B  \Pr[v_i = c]\cdot \sum_{h=0}^{c-1} z_{i,h},
\end{align*}} 
which is exactly the objective~\eqref{lpM:obj}  of $LP_M$.
\end{proof}

\section{Unweighted Probing with Deadlines}
\label{sec:unwt-deadline}

In this section we consider a generalization of the stochastic probing
problem in the presence of  {\em global time}. Each probe requires one
unit of time and each element $e\in V$ has a {\em deadline} $d_e$ (in
the global time) after which it expires.
As before, we have inner \pin and outer \pot packing constraints on the set of chosen and probed elements respectively. We show that a natural greedy algorithm achieves a good approximation  for {\em unit-weighted stochastic probing with deadlines}, when the inner and outer constraints are $k$-systems. 

\begin{theorem}
  \label{thm:unwt-deadline}
  There is a $\frac{1}{2(\kin+\kot+1)}$-approximation algorithm for
  unweighted stochastic probing with deadlines, when \pin and \pot are
  $\kin$- and $\kot$-systems.
\end{theorem}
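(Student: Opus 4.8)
The plan is to absorb the ``global time'' feature into one extra matroid constraint and then re-run the machinery of Section~\ref{sec:unwtd}. Encode the deadlines as a \emph{schedulability matroid} $\M$ on $V$: declare $T\in\M$ iff the elements of $T$ can be assigned distinct time steps with $e$ going to a step $\le d_e$. By Hall's theorem this is equivalent to $|\{e\in T:d_e\le D\}|\le D$ for every $D$, so $\M$ is the transversal matroid of the bipartite graph joining $e$ to the slots $\{1,\dots,d_e\}$, and in particular a matroid ($1$-system). Two observations make $\M$ the right object: (a) in any legal run the probed set satisfies $Q_t\in\pot\cap\M$ at all times, since the probes made so far form a schedule; and (b) conversely, \emph{any} $Q\in\M$ can be probed in earliest-deadline-first order with no element expiring. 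Hence $\pot\cap\M$ is a $(\kot{+}1)$-system, and set-theoretically ``unweighted probing with deadlines'' is ``unweighted probing'' with outer constraint $\pot\cap\M$.

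Next I would write the natural LP: take the LP of Section~\ref{sec:unwtd} and adjoin the prefix inequalities $\sum_{e:\,d_e\le D}y_e\le D$ for all $D$ (a valid relaxation of $\M$ acting on the probe-marginals $y$). As in Claim~\ref{cl:lp-opt}, the point $y^*_e=\Pr[e\text{ probed by }\opt]$ is feasible: the new constraints hold because $Q^*\in\M$ on every sample path. The algorithm is the obvious greedy: consider elements in non-increasing order of $p_e$ and probe $e$ whenever $S_t\cup\{e\}\in\pin$, $Q_t\cup\{e\}\in\pot$, and $e$ has not yet expired (it would be among the first $d_e$ probes). This is a fixed-order, hence non-adaptive, policy, so Lemma~\ref{lem:alg-value} still gives $\E[\alg]=\E\big[\sum_{e\in Q_\pi}p_e\big]$.

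The heart is dual-fitting, extending Lemma~\ref{lem:dual} with a third family $\gamma(\cdot)$ of dual variables for the new constraints. On a sample path $\pi$ with probed sequence $a_1,\dots,a_\ell$ (probing $a_i$ at time $i$, so $d_{a_i}\ge i$), keep $\alpha(\spn(S_\pi))=1$ and $\beta(\spn'(\{a_1,\dots,a_h\}))=p_{a_h}-p_{a_{h+1}}$ exactly as before, and set $\gamma$ on the deadline-$D$ constraint to the telescoping difference $p_{a_D}-p_{a_{D+1}}$ for $D\le\ell$. Its cost is $\sum_{i\le\ell}i\,(p_{a_i}-p_{a_{i+1}})=\sum_i p_{a_i}$, so it adds $\E[\sum_{e\in Q_\pi}p_e]=\E[\alg]$ to the dual value, while $\alpha,\beta$ contribute $\le\kin\E[\alg]$ and $\le\kot\E[\alg]$ as in Lemma~\ref{lem:dual}. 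For feasibility: a non-probed element is blocked by the inner constraint ($e\in\spn(S_\pi)$, covered by $\alpha$), by the outer constraint ($e\in\spn'$ of a prefix, covered by $\beta$), or by its deadline (when $e$ was reached at least $d_e$ probes had been made, whence $\sum_{D\ge d_e}\gamma(D)=p_{a_{d_e}}\ge p_e$); a probed $a_g$ is covered by $\beta$ since $a_g\in\spn'(\{a_1,\dots,a_h\})$ for all $h\ge g$. Averaging over $\pi$ gives a feasible dual, and weak duality with Claim~\ref{cl:lp-opt} bounds $\opt$ by $\Theta(\kin+\kot)\cdot\E[\alg]$; the one place a constant is lost is in reconciling the probing order (which must be by decreasing $p_e$ for the $\beta$- and $\gamma$-differences to stay non-negative) with the earliest-deadline order needed to actually \emph{execute} a committed probe set, and carrying that through gives the stated $\frac1{2(\kin+\kot+1)}$.

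The main obstacle I expect is exactly this order mismatch: dual-fitting wants elements processed in decreasing $p_e$, but then greedy commits an element to the slot equal to the current number of probes, which can be earlier than some optimal schedule would allow. Making $\gamma$ pay for this cleanly — equivalently, arguing that the probe set greedy commits to has $p$-weight within a factor $2$ of the best schedulable set consistent with $\pin$ and $\pot$ — is the delicate step; everything else is a mechanical lift of Section~\ref{sec:unwtd}, with the $+1$ in $\kin+\kot+1$ coming from $\M$ and the $2$ from the scheduling-order argument.
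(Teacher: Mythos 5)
Your proposal is correct, but it takes a genuinely different route from the paper's. The paper never touches the dual of an augmented LP: it runs a greedy that maintains a \emph{fictitious} probe set $Q$ (feasible in $\pot\cap\cL$, where $\cL$ is exactly your schedulability matroid), invokes the Section~\ref{sec:unwtd} bound as a black box on the relaxed instance with outer constraint $\pot\cap\cL$ (a $(\kot+1)$-system), and then pays a separate factor of $2$ by exhibiting---via Hall's theorem---a matching from the expired elements $B_\pi$ into the actually-probed elements $Q_\pi\setminus B_\pi$ that is $p$-dominating. You instead adjoin the prefix inequalities $\sum_{e:d_e\le D}y_e\le D$ to the LP of Section~\ref{sec:unwtd} and extend the dual-fitting of Lemma~\ref{lem:dual} with a third family $\gamma(D)=p_{a_D}-p_{a_{D+1}}$. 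Two remarks. First, your worry in the last paragraph is unfounded: in your own construction there is no ``order mismatch'' left to pay for. Your greedy probes one element per time step in decreasing-$p_e$ order, an element expires only if at least $d_e$ probes (all of larger probability) precede it, and then $\sum_{D\ge d_e}\gamma(D)=p_{a_{d_e}}\ge p_e$ covers it exactly; the $\gamma$-cost telescopes to $\sum_{e\in Q_\pi}p_e=\E[\alg]$ per Lemma~\ref{lem:alg-value}. So your dual has value $(\kin+\kot+1)\E[\alg]$ and you obtain a $\frac1{\kin+\kot+1}$-approximation---\emph{stronger} than the stated $\frac1{2(\kin+\kot+1)}$, which the paper's coupling argument loses to the Hall's-theorem step. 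Second, a small overreach: the claim that the deadline problem is ``set-theoretically'' the probing problem with outer constraint $\pot\cap\M$ is only a one-way relaxation for adaptive policies (a policy for the relaxed instance may probe an element after its deadline even though the final set lies in $\M$); this is harmless for you since you only use that direction for LP validity and your algorithm is deadline-respecting by construction, but it should not be stated as an equivalence.
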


The main idea is to relax the global deadline constraints into an {\em
  outer laminar matroid} constraint $\cL$, and then relate the deadline
probing problem to the usual probing problem with outer constraints
$\pot\cap \cL$ and inner constraints \pin. The laminar matroid \cL is defined as follows. 
$$\cL \quad := \quad \left\{ U\sse V \,: \,|U\cap \{e: d_e\le t\}|\le t, \,\,\forall t\ge 1\right\}$$
Notice that the sets $D_t=\{e: d_e\le t\}$ for $t\ge 1$ form a {\em chain} family\footnote{A chain family is a collection of subsets $D_1\sse D_2\sse\cdots D_n$.}, and so \cL is indeed a laminar matroid. 

\begin{algorithm}[h!] \caption{Greedy Algorithm for Unweighted Probing with Deadlines} 
  \begin{algorithmic}[1]
    \STATE   $Q \gets \emptyset$, $S \gets \emptyset$, $B\gets \emptyset$ and $t\gets 1$.
    \FOR{$e$ in non-increasing order of $p_e$ value}
    \IF{$Q \cup \{e\} \in \pot\cap\cL$}
    \IF{$S \cup \{e\} \in \pin$}
    \STATE $Q \gets Q \cup \{e\}$ (i.e., potentially probe $e$)
    \IF{$t\le d_e$}
    \STATE probe element $e$, and $t\gets t+1$.
    \IF{$e$ active (happens with probability $p_e$)}
    \STATE $S \gets S \cup \{e\}$ (i.e., pick $e$)
    \ENDIF
	\ELSE
	\STATE $B\gets B\cup \{e\}$.
	\STATE $S\gets S\cup \{e\}$ with probability $p_e$; and $S\gets S$ otherwise.   
     \ENDIF    
     \ENDIF
    \ENDIF
    \ENDFOR
  \end{algorithmic}
  \label{alg:greedy-deadline}
\end{algorithm}

The variable $t$ in Algorithm~\ref{alg:greedy-deadline} tracks the global time, which increases by one after each probe. The probed elements are are $Q\setminus B$ and the chosen elements are $S\setminus B$. Observe that the algorithm defines a feasible policy since elements are only probed before their respective deadlines, and both inner \pin and outer \pot constraints are satisfied. We use the sets $Q,S,B$ to couple:
\begin{OneLiners}
\item this algorithm for the deadline probing instance \J, and 
\item the greedy algorithm (Section~\ref{sec:unwtd}) for the usual probing instance \K having inner constraints \pin and outer constraints $\pot\cap \cL$. 
\end{OneLiners}
Clearly, any feasible policy for \J is also feasible for \K.  Note that the greedy algorithm for \K will probe elements $Q$ and choose elements $S$. This is the reason why sets $Q$ and $S$ are updated even when a probe does not occur in \J. Also, $B$ denotes the set of elements that are probed in \K but not in \J. Consider a decision path $\pi$ down the decision trees associated with \J and \K; note that we couple instantiations in the two decision trees. By the analysis in Section~\ref{sec:unwtd}, the algorithm's objective in \K is 
$$\E[\alg(\K)]\quad =\quad \sum_\pi \Pr(\pi)\cdot \sum_{e\in Q_\pi} p_e \quad \ge \quad \frac{\opt(\K)}{\kin+\kot+1}.$$
Recall that \K has an inner $\kin$-system \pin and outer $(\kot+1)$-system $\pot\cap\cL$. Moreover, the algorithm's objective in \J is 
$$\E[\alg(\J)]\quad =\quad \sum_\pi \Pr(\pi)\cdot \sum_{e\in Q_\pi\setminus B_\pi} p_e.$$ 
The next lemma relates these two quantities.
\begin{lemma}
For any outcome $\pi$, $\sum_{e\in Q_\pi} p_e  \le 2\cdot \sum_{e\in Q_\pi\setminus B_\pi} p_e$.
\end{lemma}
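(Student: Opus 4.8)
The plan is to rewrite the claim in the equivalent form $\sum_{e\in B_\pi}p_e \le \sum_{e\in Q_\pi\setminus B_\pi}p_e$: since $B_\pi\subseteq Q_\pi$, adding $\sum_{e\in Q_\pi\setminus B_\pi}p_e$ to both sides recovers $\sum_{e\in Q_\pi}p_e\le 2\sum_{e\in Q_\pi\setminus B_\pi}p_e$. Writing $R_\pi := Q_\pi\setminus B_\pi$ for the set of elements actually probed along $\pi$, it then suffices to exhibit an injection $\phi\colon B_\pi\to R_\pi$ with $p_{\phi(e)}\ge p_e$ for every $e\in B_\pi$; non-negativity of the $p_e$'s finishes the job.

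To build $\phi$, I would list the elements of $Q_\pi$ as $q_1,\dots,q_m$ in the order Algorithm~\ref{alg:greedy-deadline} examines them, i.e.\ in order of non-increasing $p$-value. Since the algorithm probes an element only after inserting it into $Q$, and bumps the clock $t$ by one on each probe, at every step we have $t = 1 + (\text{number of probes so far})$, and $R_\pi$ appears as a subsequence of $q_1,\dots,q_m$. The key fact I would prove is a ballot-type inequality: for every $\ell$, among $q_1,\dots,q_\ell$ the number of $B_\pi$-elements is at most the number of $R_\pi$-elements. Given this, $\phi$ follows from the standard online-matching argument --- scan $q_1,\dots,q_m$, keep the list of probed elements not yet matched, and when a $B_\pi$-element appears match it to any such unmatched probed element; the inequality guarantees one is available, and it was examined earlier, so its $p$-value is at least that of the $B_\pi$-element.

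For the ballot inequality, fix $\ell$ and let $a$ be the number of indices $j\le\ell$ with $q_j\in R_\pi$. For any $j\le\ell$ with $q_j\in B_\pi$: when $q_j$ is examined the clock reads $t=1+i$ where $i\le a$ is the number of probes made while examining $q_1,\dots,q_{j-1}$, and since $q_j$ was dropped into $B$ rather than probed, the test ``$t\le d_{q_j}$'' failed, so $d_{q_j}\le i\le a$. Thus every such $q_j$ lies in $D_a := \{e : d_e\le a\}$. But Algorithm~\ref{alg:greedy-deadline} only inserts $e$ into $Q$ when $Q\cup\{e\}\in\cL$, hence $Q_\pi\in\cL$ and in particular $|Q_\pi\cap D_a|\le a$; as the relevant $q_j$'s are distinct members of $Q_\pi\cap D_a$, there are at most $a$ of them. (The degenerate case $a=0$: then no probe occurs in the prefix, $t=1$ throughout, and a failed deadline test would need $d_{q_j}\le 0$, which we exclude, so the prefix has no $B_\pi$-element either.)

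I expect the one genuine obstacle to be recognizing that the outer laminar matroid $\cL$ is exactly the right handle on how many elements get dropped into $B$: missing a deadline inside a prefix that has already triggered $a$ probes forces deadline $\le a$, and $\cL$-feasibility of $Q_\pi$ caps the count of such elements by $a$. Everything else --- the equivalent reformulation, and the greedy construction of the injection from the ballot inequality --- should be routine.
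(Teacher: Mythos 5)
Your proof is correct and follows essentially the same route as the paper's: both reduce the claim to exhibiting a $p$-monotone injection from $B_\pi$ into $Q_\pi\setminus B_\pi$, resting on the same two facts (a failed deadline test at clock value $1+i$ forces $d_e\le i$, where $i$ counts prior probes by elements of $Q_\pi\setminus B_\pi$; and $Q_\pi\in\cL$ caps the number of small-deadline elements). The only cosmetic difference is that the paper verifies Hall's condition on the bipartite graph with edges $\{(e,f): e\in B_\pi,\, f\in Q_\pi\setminus B_\pi,\, p_f\ge p_e\}$, whereas you build the injection explicitly via a prefix (ballot-type) counting inequality and a greedy scan.
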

\begin{proof}
This proof also relies crucially on the greedy ordering in terms of probabilities. Note that each element $e\in B_\pi$ must have been considered at time $t>d_e$. Moreover, time $t$ is increased only by elements $Q_\pi\setminus B_\pi$. Now by the greedy ordering, 
\begin{equation}\label{eq:deadline-1}
|(Q_\pi\setminus B_\pi) \cap \{f: p_f\ge p_e\}|\quad \ge \quad d_e,\quad \forall e\in B_\pi.
\end{equation}
Furthermore, since $Q_\pi\in \cL$ we also have $B_\pi\sse Q_\pi\in \cL$. So,
\begin{equation}\label{eq:deadline-2}
|B_\pi \cap \{f: d_f\le d_e\}|\quad \le \quad d_e,\quad \forall e\in B_\pi.
\end{equation}
Consider a bipartite graph $H$ with left vertices $B_\pi$ and right vertices $Q_\pi\setminus B_\pi$, with an edge between $e\in B_\pi$ and $f\in Q_\pi\setminus B_\pi$ iff $p_e\le p_f$. We claim that there is a left-saturating matching in $H$. It suffices to show Hall's condition  that for any subset $R\sse B_\pi$, its neighborhood $|\Gamma(R)|\ge |R|$. Let $e:=\arg\max\{d_g: g\in R\}$. Then, we have $|R|\le d_e$ using~\eqref{eq:deadline-2}, and  $|\Gamma(R)|\ge |\Gamma(e)|\ge d_e$ by~\eqref{eq:deadline-1}. Since graph $H$ has a left-saturating matching, it is clear that $\sum_{e\in B_\pi} p_e\le \sum_{e\in Q_\pi\setminus B_\pi} p_e$. %This implies the lemma.
\end{proof}

Using this lemma, and the above bounds for $\alg(\J)$ and $\alg(\K)$,
$$\alg(\J) \quad \ge \quad \frac{\opt(\K)}{2(\kin+\kot+1)} \quad \ge \quad \frac{\opt(\J)}{2(\kin+\kot+1)}.$$
The last inequality uses the fact that instance \K is a relaxation of instance \J. This proves the first part of Theorem~\ref{thm:unwt-deadline}.

\smallskip {\bf Stochastic matching with deadlines.} We give an
application of Theorem~\ref{thm:unwt-deadline} in the kidney exchange
setting. Consider a set of patients in a hospital, where each patient
$j$ is expected to be in the system for $d_j$ days. For each pair $i,j$
of patients, there is a probability $p_{i,j}$ of having a successful
match. On each day, the hospital can perform one compatibility test and
surgery between some pair of patients. If patient $j$ is not matched by
day $d_j$, he/she is assumed to have left the system. The difference from
the usual stochastic matching~\cite{CIKMR09,BGLMNR12} is that the
``timeout level'' of each patient decreases every day, irrespective of
whether he is probed. The goal is to schedule tests so as to maximize
the expected number of matched patients. This can be modeled as the
probing problem with deadlines, on groundset $V$ being the edges of the
complete graph on patients. Each edge $(i,j)$ has deadline
$\min\{d_i,d_j\}$ and probability $p_{i,j}$. There is no outer
constraint, and the inner constraint requires the chosen edges to form a
matching (2-system). Hence Theorem~\ref{thm:unwt-deadline} implies a
$\frac16$-approximation algorithm for this problem.

We note that an LP based approach as in~\cite{BGLMNR12} can also be used to obtain an approximation ratio of $1/6$ for this problem. However the above greedy algorithm is much simpler and extends to general $k$-system constraints. For the weighted case, our result (Theorem~\ref{thm:stoc-probe}) does not seem to extend directly to this setting of deadlines. We leave this as an open question.
}

%%%%%%%%%%%%%%%%%%%%%%%%%%%%%%%%%%%%%%%%%%%%%%%%%%%%%%%%%%%%%%%%

\stocoption{
{\bf Acknowledgments:}
We thank Shuchi Chawla, Bobby Kleinberg, Tim Roughgarden, Rakesh Vohra,
and Matt Weinberg for helpful clarifications and discussions. We also thank an anonymous reviewer for pointing out that the LP for weighted stochastic probing can be solved approximately for general $k$-systems. Part of
this work was done when the first-named author was visiting the IEOR
Department at Columbia University, and IBM Thomas J.\ Watson Research
Center; he thanks them for their generous hospitality.}{
\section*{Acknowledgments}
We thank Shuchi Chawla, Bobby Kleinberg, Tim Roughgarden, Rakesh Vohra,
and Matt Weinberg for helpful clarifications and discussions. We also thank an anonymous reviewer for pointing out that the LP for weighted stochastic probing can be solved approximately for general $k$-systems, which leads to an approximation algorithm for weighted probing under $k$-system constraints. Part of
this work was done when the first-named author was visiting the IEOR
Department at Columbia University, and IBM Thomas J.\ Watson Research
Center; he thanks them for their generous hospitality.}

%%%%%%%%%%%%%%%%%%%%%%%%%%%%%%%%%%%%%%%%%%%%%%%%%%%%%%%%%%%%%%%%

\stocoption{
\bibliographystyle{splncs}
\bibliography{probe}
}{
\bibliographystyle{plain}
\bibliography{probe}
}
%\medskip

\stocoption{}{

\appendix

%%%%%%%%%%%%%%%%%%%%%%%%%%%%%%%%%%%%%%%%%%%%%%%%%%%%%%%%%%%%

\section{Bad Examples for Simpler LP-Rounding Algorithms}\label{sec:bad-eg}

Here we observe that some natural LP-rounding algorithms that work for stochastic matchings~\cite{BGLMNR12} do not work in the setting of general matroids. Let $(x,y)$ denote a solution to the linear relaxation \lp. Consider rounding this solution by considering elements to probe in the following order, where each element $e$ is probed with probability $b\cdot y_e$ when permitted by the inner and outer constraints ($0<b\le 1$ is some constant). 
\begin{OneLiners}
\item {\em Decreasing $w_e$ value.} There is no inner constraint, and the outer constraint is a graphic matroid on the graph $G$ (see Figure~\ref{fig:bad-ex}) consisting of edges $E:=\{e_i\}_{i=1}^n\bigcup \{f_i\}_{i=1}^n$ and $g$. The weights on edges $E$ are $M\gg 1$ each, and $w(g)=1$. The probabilities on edges $E$ are $\epsilon \ll \frac1{nM}$ each, and $p(g)=1$. The fractional solution $y$ has  value one on edge $g$ and value $1/2$ on each of $E$; the LP objective is at least one.  The expected weight from $E$ is at most $2nM\epsilon$. Since edge $g$ appears last in this order, the probability that $g$ is {\em not} blocked by the outer graphic matroid is at most $(1-b^2/4)^n$. (Note that if any edge of $E$ is blocked then so is $g$.) So the expected total weight of the rounding algorithm is at most $2nM\epsilon + (1-b^2/4)^n \ll 1$. 
\item {\em Decreasing $p_e$ value.} Again, there is no inner constraint and the outer constraint is a graphic matroid on $G$ (see Figure~\ref{fig:bad-ex}). The weights on $E$ are one, and $w(g)=L\gg n$. The probabilities on $E$ are one, and $p(g)=1/2$. $y$ has  value one on edge $g$ and value $1/2$ on each of $E$; so the LP objective is at least $L/2$. The expected weight from $E$ is at most $2n\ll L$. As before, since edge $g$ appears last in this order, the expected weight from $g$ is  at most $L\cdot (1-b^2/4)^n\ll L$. Hence the expected total weight is $\ll L$.
\item {\em Decreasing $w_e\cdot p_e$ value.} There is no outer constraint and the inner constraint is a graphic  matroid on graph $H$ (see Figure~\ref{fig:bad-ex}), which consists of edges $E:=\{e_i\}_{i=1}^n\bigcup \{f_i\}_{i=1}^n$ and $E':=\{g_j\}_{j=1}^N$. We set $N=n^2$. The weights are two on $E$, and $N$ on $E'$. The probabilities are $1/3$ on $E$, and $\frac1{3N}$ on $E'$.  $y$ has value one on all edges, and the LP objective is at least $N/3$. This order puts edges of $E$ before edges of $E'$. The probability that any particular edge of $E'$ is {\em not} blocked in the inner graphic matroid is at most $(1-b^2/9)^n<<1$. Thus the expected weight from $E'$ is at most $N^2\cdot (1-b^2/9)^n\cdot \frac1{3N}\ll N$. The expected weight from $E$ is at most $4n\ll N$. So the expected total weight is again much lesser than the LP objective.
\end{OneLiners}

\begin{figure}[h]
\begin{center}
\includegraphics[scale=0.75]{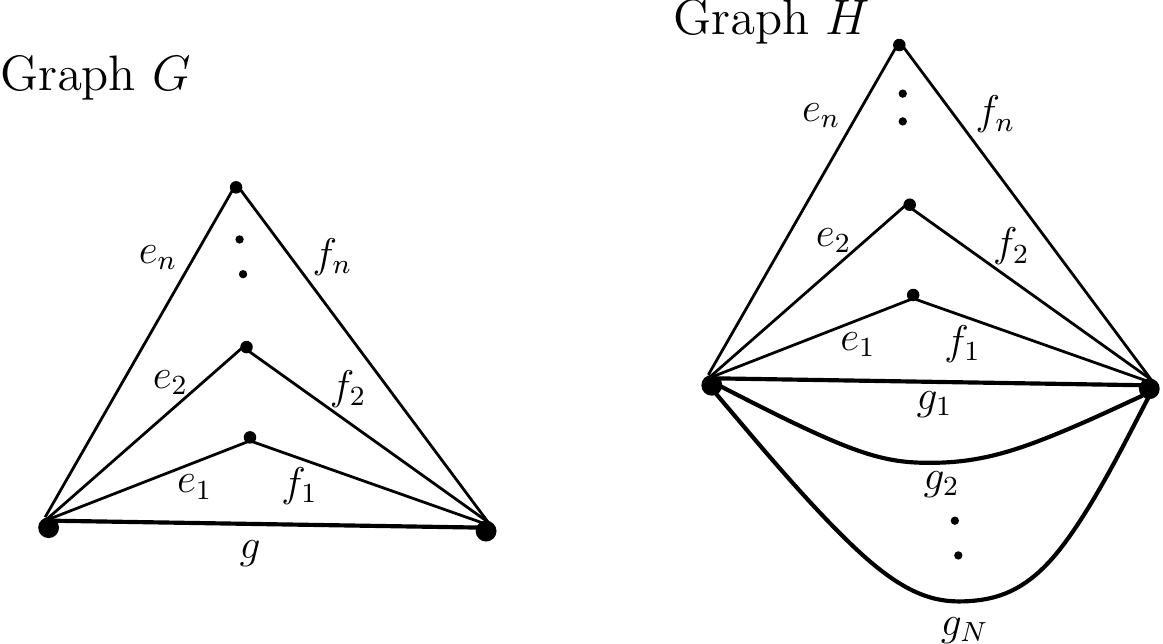}
\caption{Graphic matroids on $G$ and $H$.\label{fig:bad-ex}}
\end{center}
\end{figure}

%%%%%%%%%%%%%%%%%%%%%%%%%%%%%%%%%%%%%%%%%%%%%%%%%%%%%%%%%%%%%%%%%%%%%%%%%
%%%%%%%%%%%%%%%%%%%%%%%%%%%%%%%%%%%%%%%%%%%%%%%%%%%%%%%%%%%%%%%%%%%%%%%%%
%%%%%%%%%%%%%%%%%%%%%%%%%%%%%%%%%%%%%%%%%%%%%%%%%%%%%%%%%%%%%%%%%%%%%%%%%
}
\end{document}